\documentclass[USenglish]{lipics-v2016}

\newcommand{\sv}[1]{}
\newcommand{\lv}[1]{#1}

\usepackage{paralist}

\usepackage{amsthm,amsmath,amssymb,graphicx}
\usepackage{enumerate,verbatim}
\usepackage{enumerate}
\usepackage{xspace}

\usepackage[draft,author=]{fixme}
\fxsetup{theme=color}

\newcommand{\PPP}{\mathcal{P}}
\newcommand{\QQQ}{\mathcal{Q}}

\newcommand{\FFF}{\mathcal{F}}

\newcommand{\VVV}{\mathcal{V}}

\newcommand{\Nat}{\mathbb{N}}

\newcommand{\bigoh}{\mathcal{O}}

\theoremstyle{plain}
\newtheorem{THE}[theorem]{Theorem}
\newtheorem{PRO}[theorem]{Proposition}
\newtheorem{LEM}[theorem]{Lemma}

\newtheorem*{THE*}{Theorem}
\newtheorem*{fact*}{Fact}

\theoremstyle{definition}
\newtheorem{DEF}[theorem]{Definition}

\newtheorem*{DEF*}{Definition}
\newtheorem{proposition}[theorem]{Proposition}

\theoremstyle{remark}

\newtheorem{CLM}{Claim}

\newcommand{\var}{\mbox{var}}

\newcommand{\stw}{dependency treewidth}

\newcommand{\hy}{\hbox{-}\nobreak\hskip0pt}
\newcommand{\tw}{\textup{tw}}
\newcommand{\ptdt}[2]{F_{#2}(#1)}
\newcommand{\hubiewidth}{CD-width}

\begin{document}

\sv{\title{Small Resolution Proofs for QBF using Dependency Treewidth} }
\lv{\title{Small Resolution Proofs for QBF using Dependency Treewidth} }

\author{Eduard Eiben}
\author{Robert Ganian}
\author{Sebastian Ordyniak}
\affil{Algorithms and Complexity Group, TU Wien, Vienna, Austria\\
  \texttt{eduard.eiben@gmail.com, rganian@gmail.com, sordyniak@gmail.com}}

\maketitle

\begin{abstract}
In spite of the close connection between the evaluation of quantified
Boolean formulas (QBF) and propositional satisfiability (SAT), tools
and techniques which exploit structural properties of SAT instances
are known to fail for QBF. This is especially true for the structural
parameter treewidth, which has allowed the design of successful
algorithms for SAT but cannot be straightforwardly applied to QBF
since it does not take into account the interdependencies between
quantified variables.

In this work we introduce and develop dependency treewidth, a new structural
parameter based on treewidth which allows the efficient solution of
QBF instances. Dependency treewidth pushes the frontiers of tractability for QBF
by overcoming the limitations of previously introduced variants of
treewidth for QBF. 
We augment our results by developing algorithms for computing the
decompositions that are required to use the parameter.
\end{abstract}

\section{Introduction}
\label{sec:intro}

The problem of evaluating quantified Boolean formulas (QBF) is a generalization of the propositional satisfiability problem (SAT) which naturally captures a range of computational tasks in areas such as verification, planning, knowledge representation and automated reasoning~\cite{EglyEiterTW00,OtwellRemshagenTruemper04,Rintanen99,SabharwalAGHS06}. 
QBF is the archetypical \textsc{PSpace}-complete problem and is therefore believed to be computationally harder than \textsc{NP}-complete problems such as SAT~\cite{KleineBuningLettman99,Papadimitriou94,StockmeyerMeyer73}.

In spite of the close connection between QBF and SAT, many of the tools and techniques which work for SAT are known not to help for QBF, and dynamic programming based on the structural parameter treewidth~\cite{AtseriasOliva14,Szeider04b} is perhaps the most prominent example of this behavior. Treewidth is a highly-established measure of how ``treelike'' an instance is, 
and in the SAT setting it is known that $n$-variable instances of treewidth at most $k$ can be solved in time at most $f(k)\cdot n$~\cite{Szeider04b} for a computable function $f$. Algorithms with running time in this form (i.e., $f(k)\cdot n^{\bigoh(1)}$, where $k$ is the parameter and the degree of the polynomial of $n$ is independent of $k$) are called \emph{fixed-parameter algorithms}, and problems which admit such an algorithm (w.r.t.\ a certain parameter) belong to the class \emph{FPT}. Furthermore, in the SAT setting, treewidth allows us to do more than merely solve the instance: it is also possible to find a so-called \emph{resolution proof}~\cite{DavisPutnam60,Capellithesis16}. If the input was a non-instance, such a resolution proof contains additional information on ``what makes it unsatisfiable'' and hence can be more useful than outputting a mere \textbf{Reject} in practical settings.

In the QBF setting, the situation is considerably more complicated. It is known that QBF instances of bounded treewidth remain \textsc{PSpace}-complete~\cite{AtseriasOliva14}, and the intrinsic reason for this fact is that treewidth does not take into account the dependencies that arise between variables in QBF. So far, 
there have been 
several 
attempts at remedying this situation by introducing variants of treewidth which support fixed-parameter algorithms for QBF: \emph{prefix pathwidth} (along with \emph{prefix treewidth})~\cite{EibenGanianOrdyniak16} and \emph{respectful treewidth}~\cite{AtseriasOliva14}, along with two other parameters~\cite{AdlerWeyer12,ChenDalmau16} which originate from a different setting but can also be adapted to obtain fixed-parameter algorithms for QBF. We refer to Subsection~\ref{sub:comp} for a comparison of these parameters.
Aside from algorithms with runtime guarantees, it is worth noting 
that empirical connections between treewidth and QBF have also
been studied in the literature~\cite{PulinaT09,PulinaT10}.

In this work we introduce and develop dependency treewidth, a new structural parameter based on treewidth which supports fixed-parameter algorithms for QBF. Dependency treewidth pushes the frontiers of tractability for QBF by overcoming the limitations of both the previously introduced prefix and respectful variants. Compared to the former, this new parameter allows the computation of resolution proofs analogous to the case of classical treewidth for SAT instances. Prefix pathwidth relies on entirely different techniques to solve QBF and does not yield small resolution proofs. Moreover, the running time of the fixed-parameter algorithm which uses prefix pathwidth has a triple-exponential dependency on the parameter $k$, while dependency treewidth supports a considerably more efficient $\bigoh(3^{2k}nk)$-time algorithm for QBF.

Unlike respectful treewidth and its variants, which only take the basic dependencies between variables into account, dependency treewidth can be used in conjunction with the so-called \emph{dependency
  schemes} introduced by Samer and Szeider~\cite{SamerSzeider09a,SlivovskySzeider14}, see also the work of Biere and Lonsing~\cite{BiereLonsing10}. Dependency schemes allow an in-depth analysis of how the assignment of individual variables in a QBF depends on other
variables, and research in this direction has uncovered a large number
of distinct dependency schemes with varying complexities. The most basic dependency scheme is
called the \emph{trivial dependency scheme}~\cite{SamerSzeider09a},
which stipulates that each variable depends on all variables with
distinct quantification which precede it in the prefix. Respectful treewidth in fact coincides with dependency treewidth when the trivial dependency scheme is used, but more advanced dependency schemes allow us to efficiently solve instances which otherwise remain out of the reach of state-of-the-art techniques.

Crucially, all of the structural parameters mentioned above require a so-called \emph{decomposition} in order to solve QBF; computing these decompositions is typically an NP-hard problem. A large part of our technical contribution lies in developing algorithms to compute decompositions for dependency treewidth. Without such algorithms, it would not be possible to use the parameter unless a decomposition were supplied as part of the input (an unrealistic assumption in practical settings). 
It is worth noting that all of these algorithms can also be used to find respectful tree decompositions, where the question of finding suitable decompositions was left open~\cite{AtseriasOliva14}. 
We provide two algorithms for computing dependency tree decompositions, each suitable for use under different situations.

The article is structured as follows. After the preliminaries, we introduce the parameter and show how to use it to solve QBF (assuming a decomposition has already been computed). This section also contains an in-depth overview and comparison of previous work in the area. A separate section then introduces other equivalent characterizations of dependency treewidth. The last technical section contains our algorithms for finding dependency tree decompositions, after which we provide concluding notes and remarks.

\sv{\smallskip \noindent {\emph{Statements whose full proofs are located in the supplementary material are marked with $\star$.}}}

\section{Preliminaries}
\label{sec:prelims}
\newcommand{\guards}{\delta}

For $i\in \mathbb{N}$, we let $[i]$ denote the set $\{1,\dots,i\}$. We
refer to the book by Diestel~\cite{Diestel12} for standard graph
terminology. Given a graph $G$, we denote by $V(G)$ and $E(G)$ its vertex
and edge set, respectively. We use $ab$ as a shorthand for the edge
$\{a,b\}$. For $V'\subseteq V(G)$, the \emph{guards}
of $V'$ (denoted $\guards(V')$) are the vertices in $V(G)\setminus V'$
with at least one neighbor in $V'$.

\sv{
We refer to the standard textbooks~\cite{DowneyFellows13,FlumGrohe06} for an in-depth
overview of parameterized complexity theory. Here, we only recall that
a \emph{parameterized problem} $(Q,\kappa)$ is a 
\emph{problem} $Q \subseteq \Sigma^*$ together 
with a \emph{parameterization} $\kappa \colon \Sigma^* \to
\mathbb{N}$, where $\Sigma$ is a finite alphabet.  
A parameterized problem $(Q,\kappa)$ is \emph{fixed-parameter tractable (w.r.t.\   $\kappa$)}, 
in short \emph{FPT}, if there exists a decision algorithm for $Q$, 
a computable function $f$, 
and a polynomial function $p$, 
such that for all $x \in \Sigma^*$, the running time of the algorithm on $x$ 
is at most $f(\kappa(x)) \cdot p(|x|)$. Algorithms with this running
time are called \emph{fixed-parameter algorithms}.
}

\lv{
We refer to the standard textbooks~\cite{DowneyFellows13,FlumGrohe06} for an in-depth
overview of parameterized complexity theory. Here, we only recall that
a \emph{parameterized problem} $(Q,\kappa)$ is a 
\emph{problem} $Q \subseteq \Sigma^*$ together 
with a \emph{parameterization} $\kappa \colon \Sigma^* \to
\mathbb{N}$, where $\Sigma$ is a finite alphabet.  
A parameterized problem $(Q,\kappa)$ is \emph{fixed-parameter tractable (w.r.t.\   $\kappa$)}, 
in short \emph{FPT}, if there exists a decision algorithm for $Q$, 
a computable function $f \colon \mathbb{N} \to \mathbb{N}$, 
and a polynomial function $p \colon \mathbb{N} \to \mathbb{N}$, 
such that for all $x \in \Sigma^*$, the running time of the algorithm on $x$ 
is at most $f(\kappa(x)) \cdot p(|x|)$. Algorithms with this running
time are called \emph{fixed-parameter algorithms}.
}

\subsection{Quantified Boolean Formulas}

For a set of propositional variables $K$, a \emph{literal} is either a variable $x\in K$ or its negation $\bar{x}$. A \emph{clause} is a disjunction over literals. A \emph{propositional formula in conjunctive normal form} (i.e., a \emph{CNF formula}) is a conjunction over clauses. Given a CNF formula $\phi$, we denote the set of variables which occur in $\phi$ by $\var(\phi)$. For notational purposes, we will view a clause as a set of literals and a CNF formula as a set of clauses.

A \emph{quantified Boolean formula} is a tuple $(\phi, \tau)$ where $\phi$ is a CNF formula and $\tau$ is a sequence of quantified variables, denoted $\var(\tau)$, which satisfies $\var(\tau)\supseteq \var(\phi)$; then $\phi$ is called the \emph{matrix} and $\tau$ is called the \emph{prefix}. A QBF $(\phi, \tau)$ is true if the formula $\tau \phi$ is true. A \emph{quantifier block} is a maximal sequence of consecutive variables with the same quantifier. An \emph{assignment} is a mapping from (a subset of) the variables to $\{0,1\}$.

The \emph{primal graph} of a QBF $I=(\phi, \tau)$ is the graph $G_I$ defined as follows. The vertex set of $G_I$ consists of every variable which occurs in $\phi$, and $st$ is an edge in $G_I$ if there exists a clause in $\phi$ containing both $s$ and $t$.

\subsection{Dependency Posets for QBF}


\newcommand{\cov}{\lhd}
\newcommand{\incom}{\parallel}
\newcommand{\width}{\textup{width}}
\newcommand{\cl}[2]{\textnormal{cl}$_{#1}$(#2)}
Before proceeding, we define a few standard notions related to posets which will be used throughout the paper. A partially ordered set (\emph{poset}) $\VVV$ is a pair $(V,\leq^V)$ where $V$ is a set and $\leq^V$ is
a reflexive, antisymmetric, and transitive binary relation over $V$.
A \emph{chain} $W$ of
$\VVV$ is a subset of $V$ such that $x \leq^V y$ or $y \leq^V x$
for every $x,y \in W$. 
A \emph{chain partition} of $\VVV$ is a tuple $(W_1,\dotsc,W_k)$ such
that $\{W_1,\dotsc,W_k\}$ is a partition of $V$ and for every $i$ with
$1 \leq i \leq k$ the poset induced by $W_i$ is a chain of $\VVV$.
An \emph{anti-chain} $A$ of $\VVV$ is a subset
of $V$ such that for all $x,y \in A$ neither $x \leq^V\! y$ nor $y\leq^V\! x$.
The \emph{width} (or \emph{poset-width}) of a poset $\VVV$, denoted by $\width(\VVV)$, is the 
maximum cardinality of any anti-chain of $\VVV$.
A poset of width $1$ is called a \emph{linear order}. 
A \emph{linear extension} of a poset $\PPP = (P,\leq^P)$ is a relation $\preceq$ over $P$ such that $x\preceq y$ whenever $x\leq^P y$ and the poset $\PPP^*=(P, \preceq)$ is a linear order.
A subset $A$ of $V$
is \emph{downward-closed} if 
for every $a \in A$ it holds that $b\leq^V a\implies b\in A$. A \emph{reverse} of a poset is obtained by reversing each relation in the poset.
For brevity we will often write $\leq^V$ to refer to the poset
$\VVV:=(V,\leq^V)$. 

\lv{
\begin{PRO}[{\cite{frs03}}]\label{pro:comp-chain-part}
  Let $\VVV$ be a poset. Then in time $\bigoh(\width(\VVV)\cdot\Vert\VVV\Vert^2)$, it is
  possible to compute both $\width(\VVV)=w$ and a
  corresponding chain partition $(W_1,\dotsc,W_{w})$ of $\VVV$.
\end{PRO}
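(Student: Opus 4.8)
The plan is to reduce the task to a single maximum-matching computation, via Dilworth's theorem and Fulkerson's construction, and then recover the stated running time by computing that matching incrementally. Set $n:=|V|$ (so $n\le\Vert\VVV\Vert$); we may assume the strict order $<^V$ is available, as it can be precomputed in $\bigoh(n^2)$ time from whatever encoding of $\VVV$ is given. By Dilworth's theorem, $\width(\VVV)$ equals the minimum number of chains needed to partition $V$, so it suffices to compute a chain partition of minimum cardinality; its size is then $w$. To do this I would build the bipartite ``split'' graph $H$ on vertex set $\{a^{+}:a\in V\}\cup\{a^{-}:a\in V\}$ with an edge $a^{+}b^{-}$ whenever $a<^V b$. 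Any matching $M$ of $H$ induces a partition of $V$ into $n-|M|$ chains: interpret each matched pair $a^{+}b^{-}$ as ``$b$ succeeds $a$'', and let the chains be the maximal directed paths of the resulting functional digraph. Conversely every chain partition arises this way, so a maximum matching of $H$ yields a chain partition of minimum size $n-\nu(H)$, which by Dilworth's theorem is exactly $w$; reading off the chains from $M$ gives $(W_1,\dotsc,W_w)$, and $w$ is returned as their number.

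To meet the $\bigoh(w\cdot\Vert\VVV\Vert^{2})$ bound I would not compute $M$ from scratch (Hopcroft--Karp would only give $\bigoh(n^{2.5})$) but process the elements of $V$ one at a time in the order of a fixed linear extension $a_1,\dots,a_n$ (computable in $\bigoh(n^2)$). After step $i$ we maintain a maximum matching of the subgraph $H_i$ of $H$ induced by $a_1,\dots,a_i$, equivalently a minimum chain partition of $\{a_1,\dots,a_i\}$, which by Dilworth has at most $w$ parts. When $a_{i+1}$ is inserted, the only newly relevant vertex of $H$ is $a_{i+1}^{-}$ — since $a_{i+1}$ is maximal among the processed elements, $a_{i+1}^{+}$ is isolated in $H_{i+1}$ — so by Berge's augmenting-path characterization it suffices to search for an $M$-augmenting path starting at the unmatched vertex $a_{i+1}^{-}$. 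If such a path is found the chain count is unchanged; otherwise a new singleton chain is created. Correctness of the whole procedure is then just the standard fact that augmenting along such a path preserves maximality of the matching.

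The key point — and the step I expect to be the main obstacle — is that each augmenting-path search can be carried out in $\bigoh(w\cdot n)$ time rather than the naive $\bigoh(|E(H)|)=\bigoh(n^2)$, so that the $n$ searches together cost $\bigoh(wn^2)=\bigoh(w\cdot\Vert\VVV\Vert^2)$. The device for this, due to Felsner, Raghavan and Spinrad~\cite{frs03}, is to exploit the chain representation of the current matching: an $M$-alternating path from $a_{i+1}^{-}$ can be viewed as moving between the at most $w$ current chains, and by keeping, for each chain, a pointer that advances monotonically along it, one avoids re-scanning comparabilities; a careful amortization then bounds the work of a single search by $\bigoh(wn)$. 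Reproducing this amortized analysis in full is the only genuinely technical part — the Dilworth/Fulkerson reduction and the correctness of incremental augmentation are entirely standard.
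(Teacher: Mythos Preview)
The paper does not prove this proposition at all: it is stated with a bare citation to~\cite{frs03} and used as a black box (in the proof of Theorem~\ref{the:exact-oew}). There is therefore no ``paper's own proof'' to compare against; your proposal is in effect a sketch of the cited result itself, and as such it is faithful to the Felsner--Raghavan--Spinrad argument. The Dilworth/Fulkerson reduction, the incremental insertion along a linear extension, the observation that only $a_{i+1}^{-}$ acquires edges at step $i+1$, and the identification of the $\bigoh(wn)$ augmenting-path search as the crux are all correct. The one place where your write-up is thin is exactly where you flag it: the amortized $\bigoh(wn)$ bound per insertion really does need the chain-pointer bookkeeping from~\cite{frs03}, and without spelling that out your sketch is a pointer to the literature rather than a self-contained proof --- which, given that the paper itself only cites the result, is entirely adequate here.
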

}

We use \emph{dependency posets} to provide a general and formal way of
speaking about the various \emph{dependency schemes} introduced for
QBF~\cite{SamerSzeider09a}. 
It is important to note that dependency schemes in general are too broad a notion for our purposes; for instance, it is known that some dependency schemes do not even give rise to sound resolution proof systems. Here we focus solely on so-called \emph{permutation dependency schemes}~\cite{SlivovskySzeider16}, which is a general class containing all commonly used dependency schemes that give rise to sound resolution proof systems.
This leads us to our definition of dependency posets, which allow us to capture all permutation dependency schemes. 


Given a QBF $I=(\phi, \tau)$, a
\emph{dependency poset} $\mathcal{V}=(\var(\phi), {\leq^{I}})$ of $I$ is a poset over
$\var(\phi)$ with the following properties:

\begin{compactenum}
\item for all $x,y \in \var(\phi)$, if $x\leq^I y$, then $x$ is before $y$ in the prefix, and 
\item given any linear extension $\preceq$ of $\VVV$, the QBF $I'=(\phi, \tau_\preceq)$, obtained by permutation of the prefix $\tau$ according the $\preceq$, is true iff $I$ is true. 
\end{compactenum}


The \emph{trivial dependency scheme} is one specific example of a permutation dependency scheme. This gives rise to the \emph{trivial dependency poset}, which sets $x\leq y$ whenever $x,y$ are in different quantifier blocks and $x$ is before $y$ in the prefix.
However, more refined permutation dependency schemes which give rise to other dependency posets are known to exist and can be computed efficiently~\cite{SamerSzeider09a,SlivovskySzeider16}. In particular, it is easy to verify that a dependency poset can be computed from any permutation dependency scheme in polynomial time (by computing the transitive closure).

To illustrate these definitions, consider the following QBF: 
\lv{
 $$
 \exists a \forall b \exists c (a\vee c) \wedge (b\vee c)
 $$}
\sv{
 $
 \exists a \forall b \exists c (a\vee c) \wedge (b\vee c)
 $.} 
 Then the trivial dependency poset would set $a\leq b \leq c$. However, for instance the resolution path dependency poset (arising from the resolution path dependency scheme~\cite{VanGelder11,SlivovskySzeider12}) contains a single relation $b\leq c$ (in this case, $a$ is incomparable to both $b$ and $c$).
 
\subsection{Q-resolution}
Q-resolution is a sound and complete resolution system for QBF~\cite{KleinebuningBubeck09}. Our goal here is to formalize the required steps for the Davis Putnam variant of Q-resolution.

We begin with a bit of required notation. For a QBF $I=(\phi, \tau)$ and a variable $x\in \var(\phi)$, let $\phi_x$ be the set of all clauses in $\phi$ containing the literal $x$ and similarly let $\phi_{\bar{x}}$ be the set of all clauses containing literal $\bar{x}$. We denote by $res(I,x)$ the QBF $I'=(\phi', \tau')$ such that $\tau'= \tau\setminus\{x\}$ and $\phi' = \phi \setminus (\phi_x \cup \phi_{\bar{x}}) \cup \{(D\setminus\{x\})\cup (C\setminus\{\bar{x}\})| D\in \phi_x; C \in \phi_{\bar{x}}\}$; informally, the two clause-sets are pairwise merged to create new clauses which do not contain $x$.
For a QBF $I=(\phi, \tau)$ and a variable $x\in \var(\phi)$ we denote by $I\setminus x$ the QBF $I=(\phi', \tau\setminus\{x\})$, where we get $\phi'$ from $\phi$ by removing all occurrences of $x$ and $\bar{x}$. 

\lv{\begin{LEM}}
\sv{\begin{LEM}[$\star$]}
\label{lem:DP-exists}
 Let $I=(\phi, \tau)$ and $x\in \var(\phi)$ be the last variable in $\tau$. If $x$ is existentially quantified, then $I$ is true if and only if $res(I,x)$ is true. 
\end{LEM}

\lv{
\begin{proof}
 Assume $I$ is true and let $\FFF$ be a winning strategy for existential player in $I$~\cite{KleinebuningBubeck09}. We will show that $\FFF$ is also a winning existential strategy in $res(I,x)$. Assume that the existential player played according to  $\FFF$ in $res(I, x)$, but there is a clause $B$ that is not satisfied at the end of the game. Clearly $B\in \{(D\setminus\{x\})\cup (C\setminus\{\bar{x}\})| D\in \phi_x; C \in \phi_{\bar{x}}\}$, otherwise $B$ is also a clause of $I$ and hence it has to be satisfied due to the existential player using $\FFF$. In particular, $B = (D\setminus\{x\})\cup (C\setminus\{\bar{x}\})$ for some $D\in \phi_x$ and $C \in \phi_{\bar{x}}$. Now if $\FFF$ assigns $x$ to $1$, since $\FFF$ is a winning strategy it follows that $C$ must be satisfied by some other literal, and hence $B$ must also be satisfied---a contradiction. A symmetric argument also leads to a contradiction if $\FFF$ assigns $x$ to $0$.
 
 Assume now that $res(I,x)$ is true and let $\FFF$ be a winning strategy for the existential player in $res(I,x)$. Now suppose that after all variables of $res(I,x)$ have been assigned according to the strategy $\FFF$, there is some $D\in \phi_x$ such that $D\setminus \{x\}$ is false. Since  $(D\setminus \{x\})\cup (C\setminus \{\bar{x}\})$ is true for all $C\in \phi_{\bar{x}}$, it means that all $C\in \phi_{\bar{x}}$ are true before we assign $x$, and our strategy can assign $x$ to $1$. On the other hand if $D\setminus \{x\}$ is true for all $D$ in $\phi_x$, our strategy assigns $x$ to $0$ and again satisfies all clauses of $I$. 
\end{proof}
}

\lv{\begin{LEM}}
\sv{\begin{LEM}[$\star$]}
\label{lem:DP-forall}
 Let $I=(\phi, \tau)$ and $x\in \var(\phi)$ be the last variable in $\tau$. If $x$ is universally quantified, then $I$ is true if and only if $I\setminus x$ is true. 
\end{LEM}
\lv{
\begin{proof}
 We will prove an equivalent statement: $I\setminus x$ is false if and only if $I$ is false. It is easy to see that if $\FFF$ is a winning strategy for the universal player in $I$, then if he plays according $\FFF$, then when the universal should assign the last variable $x$ there is either a clause that is already false and does not contain $x$, or a clause that contains $x$ and is false after an assignment of $x$ according to $\FFF$. In both cases $I\setminus x$ contains a clause that is false.
 
 On the other hand, assume $\FFF$ is a winning strategy for the universal player in $I\setminus x$ and the universal plays according to $\FFF$ in $I$ until all variables but $x$ are assigned. Clearly, this strategy leads to an assignment of variables such that there is a clause $B$ in $I\setminus x$, which is false under this assignment. From the definition of $I\setminus x$, it is easy to observe that $I$ contains a clause $B'$, which is equal to one of the following: $B$, $B\cup \{x\}$, or $B\cup \{\bar{x}\}$. It is straightforward to extend $\FFF$ in a way that whenever $\FFF$ falsifies a clause $B$, then the new strategy falsifies the clause $B'$.  
\end{proof}
}

\subsection{Treewidth}

Here we will introduce three standard characterizations of treewidth~\cite{Kloks94}:
tree decompositions, elimination orderings, and cops and
robber games. These will play an important role later on, when we define their counterparts in the dependency treewidth setting and use these in our algorithms.

\textbf{Tree decomposition:}  A tree decomposition of a graph $G$ is a pair $(T, \chi)$, 
  where $T$ is a rooted tree and $\chi$ is a function from
  $V(T)$ to subsets of $V(G)$, called a \emph{bag},
  such that the following properties hold:
  \begin{inparaitem}
  \item[(T1)] $\bigcup_{t\in V(T)}\chi(t)=V(G)$,
  \item[(T2)] for each $uv\in E(G)$ there exists $t\in V(T)$ such that
    $u,v\in \chi(t)$, and
  \item[(T3)] for every $u\in V(G)$, the set $T_{u}=\{t\in V(T): u\in
    \chi(t)\}$ induces a connected subtree of $T$.
  \end{inparaitem}

To distinguish between the vertices of the tree $T$ and the
vertices of the graph $G$, we will refer to the vertices of $T$ as
\emph{nodes}.
The \emph{width} of the tree decomposition $\mathcal{T}$ is
$\max_{t\in T}|\chi(t)|-1$. The \emph{treewidth} of $G$, $\tw(G)$, is the
minimum width over all tree decompositions of $G$. 
\lv{For a node $t \in V(T)$,
we denote by $T_t$ the subtree of $T$ rooted at $t$. 
The following fact will be useful later on:
\begin{PRO}\label{pro:tw-cut}
  Let $\mathcal{T}=(T,\chi)$ be a tree decomposition of a graph $G$
  and $t \in V(T)$ a node with parent $p$ in $T$. Then $\chi(p) \cap
  \chi(t)$ separates $\chi(T_t) \setminus \chi(p)$ from the rest of $G$.
\end{PRO}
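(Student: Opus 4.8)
The plan is to argue directly from the three axioms of a tree decomposition. Set $S=\chi(p)\cap\chi(t)$ and $A=\chi(T_t)\setminus\chi(p)$, and note that $A$ and $S$ are disjoint, since $S\subseteq\chi(p)$ whereas $A$ avoids $\chi(p)$ by definition. To prove the statement it suffices to show that no edge of $G$ joins a vertex of $A$ to a vertex of $V(G)\setminus(A\cup S)$. The one combinatorial fact about the tree that I will use is that, since $p$ is the parent of $t$, the edge $tp$ is the only edge of $T$ leaving $T_t$; consequently every path in $T$ from a node of $T_t$ to a node outside $T_t$ passes through $t$ and then through $p$.

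First I would establish a localization claim: for every $u\in A$, every node $r$ of $T$ with $u\in\chi(r)$ lies in $V(T_t)$. Since $u\in\chi(T_t)$, some $t'\in V(T_t)$ satisfies $u\in\chi(t')$; if some $r\notin V(T_t)$ also had $u\in\chi(r)$, then by (T3) the connected subtree $T_u$ would contain the path of $T$ from $t'$ to $r$, which by the previous observation runs through $p$, forcing $u\in\chi(p)$ and contradicting $u\in A$.

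With the claim in hand I would take an arbitrary edge $uv\in E(G)$ with $u\in A$ and show that $v\in A\cup S$. By (T2) there is a node $r$ with $u,v\in\chi(r)$, and by the localization claim $r\in V(T_t)$, so $v\in\chi(T_t)$. If $v\notin\chi(p)$ then $v\in\chi(T_t)\setminus\chi(p)=A$. Otherwise $v\in\chi(p)$; since $v$ also belongs to $\chi(r)$ with $r\in V(T_t)$, applying (T3) to $v$ shows that the node $t$, which lies on the path of $T$ from $r$ to $p$, satisfies $v\in\chi(t)$, so $v\in\chi(p)\cap\chi(t)=S$. Either way $v\notin V(G)\setminus(A\cup S)$, which is exactly what was needed.

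I do not anticipate a real obstacle: the argument is pure bookkeeping with (T3). The only step deserving a second look is the localization claim, where connectivity of $T_u$ must be used in the correct direction --- the crux being that a node outside $T_t$ carrying $u$ would force $u$ into the ancestor bag $\chi(p)$. Degenerate cases such as $A=\emptyset$ are trivial.
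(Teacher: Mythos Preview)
Your argument is correct and is the standard way to establish this well-known separator property of tree decompositions; the localization claim via (T3) and the subsequent edge case-split via (T2) are exactly right. Note, however, that the paper does not actually supply a proof of this proposition: it is stated as a known fact about tree decompositions and used as a black box, so there is no ``paper's own proof'' to compare against.
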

}

\textbf{Elimination ordering:} An \emph{elimination ordering} of a graph is a linear order of its vertices.
Given an elimination ordering
$\phi$ of the graph $G$, the \emph{fill-in graph} $H$ of $G$
w.r.t. $\phi$ is the unique minimal graph such that: 
\vspace{-0.2cm}
\begin{itemize}
 \item $V(G)=V(H)$.
 \item $E(H)\supseteq E(G)$.
 \item If $0 \le k<i<j\le n$ and $v_i,v_j\in N_H(v_k)$, then $v_iv_j\in E(H)$.
\end{itemize}
The \emph{width} of elimination ordering $\phi$ is the maximum number
of neighbors of any vertex $v$ that are larger than $v$ (w.r.t.\ $\phi$) in $H$.


\textbf{(Monotone) cops and robber game:}
The \emph{cops and robber game} is played between two players (the cop-player
and the robber-player) on a graph $G$. A \emph{position} in the game is a pair
$(C,R)$ where $C \subseteq V(G)$ is the position of the cop-player and $R$ is a (possibly empty) connected component of $G
\setminus C$ representing the position of the robber-player. A move
from position $(C,R)$ to position $(C',R')$ is \emph{legal} if it satisfies
the following conditions:
\vspace{-0.2cm}
\begin{itemize}
\item[\noindent CM1] $R$ and $R'$ are contained in the same component of $G
  \setminus (C\cap C')$,
\item[\noindent CM2] $\guards(R) \subseteq C'$.
\end{itemize}
A play $\PPP$ is a sequence $(\emptyset, R_0), \dotsc, (C_n,R_n)$ of
positions such that for every $i$ with $1 \leq i < n$ it holds that the move
from $(C_i,R_i)$ to $(C_{i+1},R_{i+1})$ is legal; the cop-number of a play is $\max_{i\leq n}|C_i|$. 
A play $\PPP$ is won by the cop-player if $R_n=\emptyset$, otherwise
it is won by the robber-player. The cop-number of a strategy for the cop player is maximum cop-number over all plays that can arise from this strategy.
Finally, the cop-number of $G$ is the minimum cop-number of a winning strategy for the cop player.

\sv{
For any graph $G$ it holds that $G$ has treewidth $k$ iff $G$ has an elimination ordering of width $k$ iff $G$ has cop-number $k$~\cite{Kloks94}.
}

\lv{
\begin{proposition}[\cite{Kloks94}]
Let $G$ be a graph. The following three claims are equivalent:
\begin{itemize}
\item $G$ has treewidth $k$,
\item $G$ has an elimination ordering of width $k$,
\item $G$ has cop-number $k$.
\end{itemize}
\end{proposition}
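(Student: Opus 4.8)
The plan is to prove the three characterizations equivalent by a short cycle of comparisons: a tree decomposition of width $k$ yields an elimination ordering of the same width, an elimination ordering of width $k$ yields (via an auxiliary tree decomposition) a winning cop strategy of the corresponding cop-number, and a winning cop strategy yields back a tree decomposition of matching width. The first two links rest on the classical interplay of tree decompositions, elimination orderings and chordal supergraphs; the last rests on reading a tree decomposition off a monotone cop strategy. (The small additive discrepancies between the conventions in the three definitions are tracked along the way.)

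\emph{Tree decomposition to elimination ordering.} Given a rooted tree decomposition $(T,\chi)$ of width $k$, I would pass to the supergraph $H\supseteq G$ obtained by making every bag $\chi(t)$ a clique. Routine arguments show that $(T,\chi)$ is still a tree decomposition of $H$, that $H$ is chordal, and that $H$ has no clique of size more than $k+1$. Any perfect elimination ordering $\sigma$ of the chordal graph $H$ is then an elimination ordering of $G$ whose fill-in graph with respect to $\sigma$ is a subgraph of $H$: by reverse induction along $\sigma$ the only fill edges introduced join two higher neighbours of an eliminated vertex, and such pairs are already adjacent in $H$. Hence $\sigma$ has width at most $k$.

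\emph{Elimination ordering to tree decomposition to cop strategy.} Conversely, from an elimination ordering $v_1,\dots,v_n$ of width $k$ with fill-in graph $H$ (for which the ordering is a perfect elimination ordering), the bags $B_i:=\{v_i\}\cup\{v_j : j>i,\ v_iv_j\in E(H)\}$ --- each of size at most $k+1$ --- arranged so that $B_i$ becomes a child of $B_{i^*}$, where $i^*$ is the least index $>i$ with $v_{i^*}\in B_i$, form a tree decomposition of $H$ and hence of $G$; the connectivity condition (T3) is the delicate point and follows from the structure of the elimination forest. Then, from any rooted tree decomposition $(T,\chi)$ of width $k$, the cops win as follows: they maintain their position equal to a bag $\chi(t)$ with the robber confined to $\chi(T_{t'})\setminus\chi(t)$ for a child $t'$ of $t$, and then move to $\chi(t')$. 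Legality of the move (conditions CM1 and CM2) and the fact that the robber remains trapped in the subtree both follow from Proposition~\ref{pro:tw-cut}; since the robber's region strictly descends in $T$, the play ends with $R_n=\emptyset$, and at most $k+1$ cops are ever used.

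\emph{Cop strategy to tree decomposition.} This direction is what I expect to be the main obstacle. Given a winning cop strategy, I would build its strategy tree: the tree whose nodes are the positions $(C_i,R_i)$ reachable when the cops follow the strategy and the robber plays to survive as long as possible. Taking this tree (after pruning repetitions) as $T$ and assigning to the node of a move $(C_i,R_i)\to(C_{i+1},R_{i+1})$ the bag $C_i\cup C_{i+1}$ gives bags whose size is controlled by the cop-number. Properties (T1) and (T2) follow easily because the strategy is winning (every vertex is eventually covered, and the endpoints of each edge share a bag when the robber is driven across it). The crux is (T3): here the monotonicity built into CM1 and CM2 --- the boundary $\guards(R_i)$ of the robber's region must stay under cops, and the region can only migrate within a component avoiding the retained cops --- guarantees that once a vertex leaves the cop set it never reappears in a later bag along a branch, so each $T_u$ is connected. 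Combining the three implications into a cycle, and reconciling the normalization constants of the definitions, yields the stated equivalences.
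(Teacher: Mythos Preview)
The paper does not supply a proof of this proposition: it is stated as a classical fact and attributed to Kloks~\cite{Kloks94}, so there is no argument in the paper to compare yours against. (The paper does prove the analogous equivalences for \emph{dependency} treewidth in Section~\ref{sec:char}, but not the unrestricted version you are addressing here.) Your outline follows the standard route and is largely sound; the one genuine gap is in the cop-strategy-to-tree-decomposition direction.

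Assigning the bag $C_i\cup C_{i+1}$ to a move does produce a valid tree decomposition, but its bags can have size up to $|C_i|+|C_{i+1}|$, which bounds the treewidth only by roughly twice the cop-number. That is a multiplicative loss, not one of the ``small additive discrepancies'' you allude to, and it breaks the tight cycle of inequalities needed for the three-way equivalence. The standard repair is to first normalise the winning strategy so that in each move cops are either only added or only removed: any legal move $(C,R)\to(C',R')$ factors as $(C,R)\to(C\cap C',R'')\to(C',R')$ without raising the cop-number, because $\guards(R)\subseteq C\cap C'$ (trivially $\guards(R)\subseteq C$ since $R$ is a component of $G\setminus C$, and $\guards(R)\subseteq C'$ by CM2). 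After this normalisation $|C_i\cup C_{i+1}|=\max(|C_i|,|C_{i+1}|)\le\text{cop-number}$, and your construction then yields the correct width. As an aside, the proposition as printed does elide the usual $+1$ between width and cop-number; the paper's own lemma for the dependency version correctly records that the cop-number equals the width of an optimal tree decomposition \emph{plus one}.
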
}

\section{Dependency Treewidth for QBF}
\newcommand{\sto}[1]{$#1$-elimination ordering}

We are now ready to introduce our parameter. We remark that in the case of dependency treewidth, it is advantageous to start with a counterpart to the elimination ordering characterization of classical treewidth, as this is used extensively in our algorithm for solving QBF. We provide other equivalent characterizations of dependency treewidth (representing the counterparts to tree decompositions and cops and robber games) in Section~\ref{sec:char}; these are not only theoretically interesting, but serve an important role in our algorithms for computing the dependency treewidth. 

Let $I = (\phi, \tau)$ be a QBF instance with a dependency poset $\PPP$. 
An elimination ordering of $G_I$ is \emph{compatible} with $\PPP$ if it is a linear extension of the reverse of $\PPP$; intuitively, this corresponds to being forced to eliminate variables that have the most dependencies first. For instance, if $\PPP$ is a trivial dependency poset then a compatible elimination ordering must begin eliminating from the rightmost block of the prefix. We call an elimination ordering of $G_I$ that is compatible with $\PPP$ a \emph{\sto{\PPP}} (or \emph{dependency elimination ordering}). The \emph{\stw{}} w.r.t.\ $\PPP$ is then the minimum width of a \sto{\PPP}.

\subsection{Using \stw}
Our first task is to show how dependency elimination orderings can be used to solve QBF.

\sv{\begin{THE}[$\star$]
\label{thm:using}
There is an algorithm that given \emph{1.}\ a QBF $I$ with $n$ variables and $m$ clauses, 
\emph{2.}\ a dependency poset $\PPP$ for I, and \emph{3.}\ a $\PPP$-elimination ordering $\pi$ of width $k$,
decides whether $I$ is true in time $\bigoh(3^{2k}kn)$.  Moreover, if $I$ is false, then the algorithm outputs a $Q$-resolution refutation of size $\bigoh(3^kn)$.
\end{THE}
}
\lv{\begin{THE}
\label{thm:using}
There is an algorithm that given
\begin{enumerate}
\item a QBF $I$ with $n$ variables and $m$ clauses, 
\item a dependency poset $\PPP$ for I, and 
\item a $\PPP$-elimination ordering $\pi$ of width $k$, 
\end{enumerate}
decides whether $I$ is true in time $\bigoh(3^{2k}kn)$.  Moreover, if $I$ is false, then the algorithm outputs a $Q$-resolution refutation of size $\bigoh(3^kn)$.
\end{THE}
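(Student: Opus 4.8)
The plan is to process the variables according to the elimination ordering $\pi$, eliminating each variable $x$ using the Davis--Putnam-style rules established in Lemmas~\ref{lem:DP-exists} and~\ref{lem:DP-forall}: if $x$ is existential we replace $I$ by $res(I,x)$, and if $x$ is universal we replace $I$ by $I\setminus x$. By those lemmas, the truth value of the instance is preserved at every step, and after all $n$ variables have been eliminated we are left with a formula containing no variables — either the empty set of clauses (true) or a formula containing the empty clause (false). The key point, which makes the whole scheme correct, is that $\pi$ being a $\PPP$-elimination ordering means the variable eliminated at each step is allowed to be moved to the last position of the (current) prefix: $\pi$ is a linear extension of the reverse of $\PPP$, so the $\pi$-first variable is $\PPP$-maximal, hence (by property~2 of dependency posets, applied to an appropriate linear extension) can be permuted to the end of the prefix without changing the truth value, which is exactly the hypothesis needed to apply Lemma~\ref{lem:DP-exists}/\ref{lem:DP-forall}. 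One must check that this property is maintained as variables are removed — i.e.\ that the restriction of $\PPP$ to the remaining variables is still a dependency poset of the current instance, and that the current graph is still (a subgraph of, or equal to) what the fill-in graph predicts; this is the kind of invariant I would state explicitly and carry through the induction.

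The running-time and proof-size bounds come from the width condition. I would argue, exactly as in the classical treewidth/resolution correspondence for SAT, that when $x$ is eliminated at step $i$, the only variables that can appear together with $x$ in a clause of the current formula are the $\pi$-later neighbors of $x$ in the fill-in graph $H$ of $G_I$ w.r.t.\ $\pi$ — and there are at most $k$ of these by definition of the width of $\pi$. Hence every clause ever produced is a clause over a set of at most $k+1$ variables that forms (together with $x$) a subset of a ``bag'' $\{x\}\cup N_H^{>}(x)$; the number of distinct clauses over $k+1$ variables is at most $3^{k+1}$ (each variable occurs positively, negatively, or not at all), so at each of the $n$ steps we store $\bigoh(3^k)$ clauses, giving $\bigoh(3^k n)$ total clauses in the derivation — this is the claimed refutation size. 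For the time bound, the expensive operation is the resolution step $res(I,x)$: it pairs up clauses containing $x$ with clauses containing $\bar x$, and since each of these sets has size $\bigoh(3^k)$ we do $\bigoh(3^{2k})$ clause-merges per step, each touching $\bigoh(k)$ literals, for $\bigoh(3^{2k}k)$ work per variable and $\bigoh(3^{2k}kn)$ overall. I would also note that the new fill-in edges created by eliminating $x$ are exactly the edges among $N_H^{>}(x)$, which are already present in $H$, so the ``current graph stays inside $H$'' invariant is preserved and the width never exceeds $k$ at a later step.

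For the refutation output in the false case: the sequence of $res(I,x)$ operations is literally a Davis--Putnam resolution derivation, and when the final formula contains the empty clause we can trace back through the merges to extract a tree-like Q-resolution refutation whose clauses are all among the $\bigoh(3^k n)$ clauses described above; for universal variables, the step $I\setminus x$ corresponds to the universal-reduction rule of Q-resolution, so the whole object is a legitimate Q-resolution refutation of the stated size. I would be slightly careful here to either (i) bound the size of the refutation by the number of distinct clauses appearing (which is what the $\bigoh(3^k n)$ bound counts) rather than the number of derivation steps, or (ii) remark that standard cleanup removes redundancy; and to confirm that universal reduction is applied soundly, which needs precisely that the eliminated universal variable is last in the current prefix — again guaranteed by the $\PPP$-compatibility of $\pi$.

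**Expected main obstacle.** The routine calculations (clause counts, the $3^{k+1}$ and $3^{2k}$ bounds) are easy; the real work is the bookkeeping in the induction — maintaining the invariant that after eliminating a prefix of $\pi$, the triple (current matrix, current prefix, restriction of $\PPP$) is still a valid instance-with-dependency-poset to which Lemmas~\ref{lem:DP-exists} and~\ref{lem:DP-forall} apply, and that the current primal graph is a subgraph of $H$ with the right ``higher-neighborhood'' structure so that the width bound is not violated mid-run. In particular one has to verify that property~2 of dependency posets (truth-invariance under linear extensions) descends to induced sub-posets on downward-unclosed — rather, on the co-ideals — sets of variables after resolution/deletion; spelling this out cleanly is the crux of the proof.
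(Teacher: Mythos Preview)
Your proposal is correct and follows essentially the same approach as the paper: eliminate variables in the order given by $\pi$ via Lemmas~\ref{lem:DP-exists} and~\ref{lem:DP-forall}, maintain the invariant that the current primal graph is a subgraph of the fill-in graph $H$, and use the width bound to cap the number of distinct clauses over each neighborhood by $3^{k}$, yielding the $\bigoh(3^{2k}kn)$ time and $\bigoh(3^{k}n)$ refutation size.

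The one place where the paper is cleaner than your plan is precisely the point you flag as the ``expected main obstacle.'' You propose to carry through the induction the invariant that the restriction of $\PPP$ to the remaining variables is still a dependency poset of the current instance, and you worry about whether property~2 of dependency posets descends to co-ideals after resolution/deletion. The paper sidesteps this entirely: it uses the dependency-poset property \emph{once}, up front, to replace $I=(\phi,\tau)$ by the equivalent instance $I'=(\phi,\tau')$ where $\tau'$ is the reverse of $\pi$ (this is legitimate because $\pi$ is a linear extension of the reverse of $\PPP$). From that point on the argument is purely about a QBF with a fixed prefix $\tau'$, and Lemmas~\ref{lem:DP-exists} and~\ref{lem:DP-forall} apply directly at every step because the variable being eliminated is, by construction, the last one in the current prefix. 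No dependency-poset structure needs to be tracked after the first line. So the ``crux'' you anticipated dissolves, and the only invariant that actually needs maintaining is the graph-theoretic one (current primal graph $\subseteq H$), which you already handle correctly.
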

}

\lv{\begin{proof}}
\sv{\begin{proof}[Sketch of Proof]}
Let $I=(\phi, \tau)$ and let $x_1,\dots, x_n$ denote the variables of $\phi$ such that $x_i\le_\pi x_{i+1}$ for all $1\le i<n$.  
From the definition of the dependency poset and the fact that $\pi$ is a dependency elimination ordering, it follows that the QBF instance $I'=(\phi, \tau')$, where $\tau'$ is the reverse of $\pi$, is true if and only if $I$ is true.
 
To solve $I$ we use a modification of the Davis Putnam resolution algorithm~\cite{DavisPutnam60}. We start with instance $I'$ and recursively eliminate the last variable in the prefix using Lemmas~\ref{lem:DP-exists} and~\ref{lem:DP-forall} until we either run out of variables or we introduce as a resolvent a non-tautological clause that is either empty or contains only universally quantified variables. We show that each variable we eliminate has the property that it only shares clauses with at most $k$ other variables, and in this case we introduce at most $3^k$ clauses of size at most $k$ at each step. 

From now on let $H$ be the fill-in graph of the primal graph of $I$ with respect to $\pi$, and let us define $I_i = (\phi^i, \tau^i)$ for $1\le i\le n$ as follows:
\sv{
(1) $I_1=I'$,
(2) $I_{i+1} = I_i\setminus x_i$ if $x_i$ is universally quantified, and
(3) $I_{i+1} = res(I_i, x_i)$, if $x_i$ is existentially quantified.
}

\lv{
\vspace{-0.2cm}
\begin{enumerate}
\item $I_1=I'$,
\item $I_{i+1} = I_i\setminus x_i$ if $x_i$ is universally quantified, and
\item $I_{i+1} = res(I_i, x_i)$, if $x_i$ is existentially quantified.
\end{enumerate}
}
Note that $x_i$ is always the last variable of the prefix of $I_i$ and it follows from Lemmas~\ref{lem:DP-exists} and~\ref{lem:DP-forall} that $I_{i+1}$ is true if and only if $I_i$ is true. Moreover, $I_n$ only contains a single variable, and hence can be decided in constant time. 
\lv{In the following, we show by induction 
}\sv{One can show by induction }that $I_{i+1}$ contains at most $3^k$ new clauses, i.e., clauses not contained in $I_i$.
To this end, we show and use the fact that both $\phi^i_x$ and $\phi^i_{\bar{x}}$ contain at most $3^k$ clauses, and this is sufficient to ensure a small Q-resolution refutation if the instance is false.
\sv{A formal proof of this fact and runtime analysis are provided in the full version.}
\lv{

\begin{CLM}
\label{cl:using}
 The instance $I_{i+1}$ contains at most $3^k$ clauses not contained in $I_i$.
 Furthermore, if the primal graph of $I_i$ is a subgraph of $H$, then so is the primal graph of $I_{i+1}$.
\end{CLM}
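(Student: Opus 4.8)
The plan is to prove both assertions simultaneously by induction on $i$, using the elimination ordering $\pi$ and the fill-in graph $H$ as the bookkeeping device. The key invariant I would carry through the induction is: \emph{the primal graph of $I_i$ is a subgraph of $H$ restricted to the vertex set $\{x_i,\dots,x_n\}$, and moreover the $H$-neighborhood of $x_i$ among $\{x_{i+1},\dots,x_n\}$ has size at most $k$.} The second part of the invariant is exactly what the definition of the width of $\pi$ gives us: in $H$, the number of neighbors of $x_i$ that come later in $\pi$ is at most $k$. So the real content is maintaining the ``primal graph $\subseteq H$'' part, which is precisely the ``Furthermore'' sentence of the claim.

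First I would handle the base case $i=1$: the primal graph of $I_1 = I'$ is $G_I$ itself, which is a subgraph of $H$ by definition of the fill-in graph (property $E(H)\supseteq E(G)$). For the inductive step, suppose the primal graph $G_{I_i}$ of $I_i$ is a subgraph of $H[\{x_i,\dots,x_n\}]$. There are two cases. If $x_i$ is universally quantified, then $I_{i+1} = I_i\setminus x_i$ is obtained simply by deleting all occurrences of $x_i$; this only removes vertex $x_i$ and its incident edges, so $G_{I_{i+1}}$ is a subgraph of $G_{I_i} - x_i$, hence of $H[\{x_{i+1},\dots,x_n\}]$, and no new clauses are created at all, so the count of $0 \le 3^k$ new clauses is trivial. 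If $x_i$ is existentially quantified, then $I_{i+1} = res(I_i,x_i)$: the new clauses are exactly the resolvents $(D\setminus\{x_i\})\cup(C\setminus\{\bar x_i\})$ for $D\in\phi^i_{x_i}$, $C\in\phi^i_{\bar x_i}$. Every variable appearing in such a resolvent is a variable that shared a clause with $x_i$ in $I_i$, i.e.\ an $H$-neighbor of $x_i$ (using the inductive hypothesis that $G_{I_i}\subseteq H$), and all such neighbors lie in $\{x_{i+1},\dots,x_n\}$ since $x_i$ is earliest among $\{x_i,\dots,x_n\}$ in $\pi$. By the fill-in property (if $v_i, v_j \in N_H(v_k)$ with $k<i<j$ then $v_iv_j\in E(H)$), any two such neighbors are already adjacent in $H$; hence every edge of a new resolvent clause is an edge of $H[\{x_{i+1},\dots,x_n\}]$, and together with the surviving old clauses of $I_i$ (which lose vertex $x_i$) this shows $G_{I_{i+1}}\subseteq H[\{x_{i+1},\dots,x_n\}]$.

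It remains to bound the number of new clauses. Let $N = N_H(x_i)\cap\{x_{i+1},\dots,x_n\}$; by the width bound $|N|\le k$. Every clause of $\phi^i$ containing the literal $x_i$ or $\bar x_i$ consists of $x_i$ (positively or negatively) together with literals over variables in $N$ — again because $G_{I_i}\subseteq H$ forces the other variables of such a clause to be $H$-neighbors of $x_i$. Since a clause is a set of literals with no variable repeated, there are at most $2^{|N|}\le 2^k$ possible clauses containing $x_i$ and at most $2^k$ containing $\bar x_i$; hence $|\phi^i_{x_i}|\le 2^k$ and $|\phi^i_{\bar x_i}|\le 2^k$. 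The number of resolvents is therefore at most $|\phi^i_{x_i}|\cdot|\phi^i_{\bar x_i}|\le 4^k$. To get the sharper bound $3^k$ claimed, I would count more carefully: a resolvent $(D\setminus\{x_i\})\cup(C\setminus\{\bar x_i\})$ is determined by, for each variable $y\in N$, whether $y$ occurs positively, negatively, or not at all in the resolvent — that is $3$ choices per variable of $N$, giving at most $3^{|N|}\le 3^k$ distinct resolvent clauses (tautological resolvents, where some $y$ would occur in both polarities, are discarded, so this over-counts if anything). Each such clause has size at most $|N|\le k$. This establishes both parts of the claim, and feeds directly into the size bound $\bigoh(3^k n)$ for the Q-resolution refutation, since we perform $n$ elimination steps each adding at most $3^k$ clauses.

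The step I expect to be the main obstacle is the careful $3^k$ counting of resolvents versus the naive $4^k$ bound, and making fully precise the argument that \emph{every} variable occurring in a clause of $\phi^i_{x_i}\cup\phi^i_{\bar x_i}$ is an $H$-neighbor of $x_i$ lying later in $\pi$ — this leans on the inductive ``primal graph $\subseteq H$'' invariant, so the two halves of the claim genuinely have to be proved together rather than in sequence. Everything else is routine bookkeeping about the Davis–Putnam step and the fill-in graph definition.
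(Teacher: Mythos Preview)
Your approach is essentially the same as the paper's: split on the quantifier of $x_i$, use the fill-in property of $H$ to absorb the new edges created by resolution, and count resolvent clauses by the at-most-$k$ neighbors of $x_i$ in $H$. Your direct count of the resolvents as $3^{|N|}\le 3^k$ (three choices per variable in $N$: positive, negative, absent) is exactly what the paper does.

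One small slip to fix: your intermediate bound $|\phi^i_{x_i}|\le 2^{|N|}$ is wrong. A clause containing the literal $x_i$ together with literals over $N$ allows, for each $y\in N$, the three options $y$, $\bar y$, or absent, so the correct bound is $3^{|N|}$, not $2^{|N|}$ (and likewise for $\phi^i_{\bar x_i}$). This does not damage your proof of the claim, since you never actually use the $2^k$ or $4^k$ figures---you bypass them with the correct direct $3^k$ count of resolvents---but the sentence as written is false and should be corrected or simply deleted. The paper, for what it is worth, states the $3^k$ bound on $|\phi^i_{x_i}|$ and $|\phi^i_{\bar x_i}|$ directly and then observes that the resolvents, being clauses over the same $k$ variables, are also at most $3^k$ in number.
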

\begin{proof}[Proof of the Claim]

We distinguish two cases: either $x_{i}$ is universal, or existential. In the former case, it is easily observed that the primal graph of $I_i\setminus x_i$ is a subgraph of the primal graph of $I_i$ obtained by removing $x_i$ from the graph. Hence, in this case the primal graph of $I_{i+1}$ is a subgraph of $H$ as well. Moreover, we do not add any new clauses, only remove $x_i$ from already existing ones. 

In the later case, let $x_i$ be an existentially quantified variable. Let $x_px_q$ be an edge in the primal graph of $res(I_i,x_i)$, but not in the primal graph of $I_i$. Clearly, $x_p$ and $x_q$ are in a newly added clause $B=(D\setminus\{x\})\cup (C\setminus\{\bar{x}\})$ such that $ D\in \phi^i_{x_i}$ and $ C \in \phi^i_{\bar{x_i}}$. But that means
that both $x_p$ and $x_q$ are in a clause with $x_i$ in $I_i$. Hence, $x_px_i$ and $x_qx_i$ are both edges in the primal graph of $I_i$ and also in $H$. But, $p>i$ and $q>i$, since otherwise $x_p$ 
and $x_q$ will not appear in $I_{i+1}$. Since $H$ is the fill-in graph w.r.t.\ $\pi$, $H$ contains $x_px_q$ as well and $H$ is a supergraph of the primal graph of $I_i$. Moreover, as 
$\pi$ is an elimination ordering of the primal graph of $I$ of width $k$, there are at most $k$ variables in $\phi^i$ that appear with $x_i$ in a clause in $I_i$. Hence there are at 
most $3^k$ different clauses containing these variables, and so $\phi^i_{x_i}$, $\phi^i_{\bar{x_i}}$ contain at most $3^k$ clauses of size at most $k+1$. 
Finally, the set of new clauses $\{(D\setminus\{x_i\})\cup (C\setminus\{\bar{x_i}\})| D\in \phi^i_{x_i}; C \in \phi^i_{\bar{x_i}}\}$ contains at most $3^k$ clauses too, and the claim follows. 
\end{proof}
Since $I$ is equivalent to $I_n$, it is easy to see that if $I$ is false, then $I_n$ either contains the empty clause, or two clauses $\{x_n\}$ and $\{\bar{x_n}\}$, or $x_n$ is universally quantified. In these cases it is easy to see that the union of clauses of all $I_i$ (including the empty clause, if it is not already in $I_n$) is a Q-resolution refutation. Moreover, it follows from Claim~\ref{cl:using} that this Q-resolution refutation has at most $m+3^kn$ clauses.

For the runtime analysis, the time necessary to compute a variable-clause adjacency list data structure (which will be useful for the next step) is upper-bounded by $\bigoh(mk)$ due to clauses having size at most $k$ (since the width of the elimination ordering is at most $k$). If $x$ is universal, then $I\setminus x$ can be computed in time $\bigoh(3^k)$, since $x$ occurs in at most $3^k$ clauses. If $x$ is existential, then we need to compute the new clauses which takes time at most $\bigoh(3^{2k}\cdot k)$ since there are at most $3^{2k}$ pairs of clauses containing $x$ and each such clause has size at most $k$. 

Finally, observe that the total number of clauses cannot exceed $n\cdot k$ (because the treewidth of the matrix is also bounded by $k$), and so both $\bigoh(mk)$ and $m$ are superseded by the other term in the runtime and the resolution refutation size.
}
\end{proof}

\subsection{A Comparison of Decompositional Parameters for QBF}
\label{sub:comp}
As was mentioned in the introduction, two dedicated decompositional parameters have previously been introduced specifically for evaluating quantified Boolean formulas: \emph{prefix pathwidth} (and, more generally, \emph{prefix treewidth})~\cite{EibenGanianOrdyniak16} and \emph{respectful treewidth}~\cite{AtseriasOliva14}. The first task of this section is to outline the advantages of \stw~compared to these two parameters.

Prefix pathwidth is based on bounding the number of viable strategies in the classical two-player game characterization of the QBF problem~\cite{EibenGanianOrdyniak16}. As such, it decomposes the dependency structure of a QBF instance \emph{beginning from variables that have the least dependencies} (i.e., may appear earlier in the prefix). On the other hand, our \stw~is based on Q-resolution and thus decomposes the dependency structure \emph{beginning from variables that have the most dependencies} (i.e., may appear last in the prefix). Lemma~\ref{lem:incomp} shows that both approaches are, in principle, incomparable. That being said, \stw\ has two critical advantages over prefix treewidth/pathwidth:
\begin{enumerate}
\item \stw~outputs small resolution proofs, while it is not at all clear whether the latter can be used to obtain such resolution proofs;
\item \stw~supports a single-exponential fixed-parameter algorithm for QBF (Theorem~\ref{thm:using}), while the latter uses a prohibitive triple-exponential algorithm~\cite{EibenGanianOrdyniak16}.
\end{enumerate}

\lv{\begin{LEM}}
\sv{\begin{LEM}[$\star$]}
\label{lem:incomp}
Let us fix the trivial dependency poset. There exist infinite classes $\cal A,B$ of QBF instances such that:
\vspace{-0.2cm}
\begin{enumerate}
\item[a.] $\cal A$ has unbounded dependency treewidth but prefix pathwidth at most $1$;
\item[b.] $\cal B$ has unbounded prefix pathwidth (and prefix treewidth) but dependency treewidth at most~$1$.
\end{enumerate}
\end{LEM}

\sv{
\begin{proof}[Sketch of Proof]
Consider the following examples for $\cal A,B$. Let ${\cal A}=\{ A_i = \exists x_1, \dots, x_i \forall y \exists x$ $(y\vee x) \wedge \bigwedge_{j=1}^i (x_j\vee x)\}$, and let ${\cal B}=\{B_i=\exists x_1 \forall x_2 \exists x_3 \forall x_4 \dots \exists x_{2^i-1}\forall x_{2^i} \exists x_{2^i+1} \bigwedge_{j=1}^{i-1} ((x_j \vee x_{2j}) \wedge (x_j \vee x_{2j+1})\}$. It is straightforward to verify that these classes satisfy the conditions stipulated in the lemma.
\end{proof}
}

\lv{
\begin{proof}
\textbf{a.}
Let $$A_i = \exists x_1, \dots, x_i \forall y \exists x (y\vee x) \wedge \bigwedge_{j=1}^i (x_j\vee x).$$
The trivial dependency poset $\PPP_i$ for $A_i$ would be $\{x_1,\dots, x_i\} \le y \le x$. Hence every $\PPP_i$-elimination ordering must start with $x$, and then the width of such an ordering would be $i+1$. On the other hand, one can observe~\cite{EibenGanianOrdyniak16} that the path decomposition $\QQQ = (Q_1,\dots, Q_{i+1})$, where $Q_j= \{x_j,x\}$ for  $1\le j\le i$ and $Q_{i+1}=\{y,x\}$, is a prefix path-decomposition w.r.t. $\PPP_i$ of width $1$.

\textbf{b.} 
Consider the following formula with alternating prefix:
 $$B_i = \exists x_1 \forall x_2 \dots \forall x_{2^i} \exists x_{2^i+1} \bigwedge_{j=1}^{2^{i-1}} ((x_j \vee x_{2j}) \wedge (x_j \vee x_{2j+1}).$$
Since the quantifiers in the prefix of $B_i$ alternate, the trivial dependency poset $\PPP_i$ for $B_i$ would be 
the linear order $x_1 \le x_2\le \cdots \le x_{2^i}$. It is readily observed that the primal graph of $B_i$ is a balanced binary tree of depth $i$, and it is known that the pathwidth of such trees is $i-1$~\cite{Diestel95}. From the fact that pathwidth is a trivial lower bound for prefix pathwidth together with previous work on prefix treewidth~\cite[Theorem 6]{EibenGanianOrdyniak16}, it follows that $i-1$ is a lower bound on the prefix treewidth of $B_i$. 

On the other hand, since $\PPP_i$ is linear order, the only elimination ordering compatible with $\PPP_i$ is the reverse of $\PPP_i$. Moreover, from the definition of $B_i$, it is easily seen that $x_j$ has at most $1$ neighbor that is smaller w.r.t.\ $\PPP_i$, namely $x_{\lfloor j/2 \rfloor}$. Therefore, the \stw{} of $B_i$ is $1$. 
\end{proof}
}

Respectful treewidth coincides with \stw~when the trivial dependency scheme is used, i.e., represents a special case of our measure. Unsurprisingly, the use of more advanced dependency schemes (such as the resolution path dependency scheme~\cite{VanGelder11,SlivovskySzeider16}) allows the successful deployment of \stw\ on much more general classes of QBF instances. Furthermore, \stw\ with such dependency schemes will always be upper-bounded by respectful treewidth, and so algorithms based on \stw\ will outperform the previously introduced respectful treewidth based algorithms.

\sv{\begin{LEM}[$\star$]}
\lv{\begin{LEM}}
There exists an infinite class $\cal C$ of QBF instances such that $\cal C$ has unbounded dependency treewidth with respect to the trivial dependency poset but dependency treewidth at most $1$ with respect to the resolution-path dependency poset.
\end{LEM}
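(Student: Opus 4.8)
The plan is to set $\mathcal{C} := \mathcal{A}$, the family from Lemma~\ref{lem:incomp}, i.e.\ $\mathcal{C} = \{A_i\}_{i\ge 1}$ with $A_i = \exists x_1,\dots,x_i\,\forall y\,\exists x\;(y\vee x)\wedge\bigwedge_{j=1}^{i}(x_j\vee x)$. The ``unbounded'' half is then already established: the proof of Lemma~\ref{lem:incomp}(a) shows that the trivial dependency poset of $A_i$ is $\{x_1,\dots,x_i\}\le y\le x$, so its unique maximal element $x$ must be eliminated first by any compatible elimination ordering, and since $x$ has degree $i+1$ in the star-shaped primal graph $G_{A_i}$, the dependency treewidth of $A_i$ with respect to the trivial poset equals $i+1$ and is hence unbounded over $\mathcal{C}$.

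So the remaining work is to bound the dependency treewidth of $A_i$ with respect to the resolution-path dependency poset $\mathcal{P}_i$ by $1$. First I would determine $\mathcal{P}_i$: since no variable of $A_i$ occurs negatively, $A_i$ admits no non-trivial resolution paths, and applying the definition of the resolution-path dependency scheme~\cite{VanGelder11,SlivovskySzeider16} one verifies that the only relation it produces is $y\le x$; in particular none of the existential variables $x_j$ is placed below $x$ or below $y$ --- this is exactly the phenomenon illustrated by the worked example following the definition of dependency posets, where $a$ is incomparable to both $b$ and $c$ even though it co-occurs with $c$ in a clause. Hence $\mathcal{P}_i$ has $y\le x$ as its only non-trivial relation, with every $x_j$ isolated. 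Now I would exhibit the compatible elimination ordering $\pi_i=(x_1,\dots,x_i,x,y)$: it is a linear extension of the reverse of $\mathcal{P}_i$ because its only constraint is that $x$ precede $y$, which holds, while the $x_j$ are unconstrained; and since $G_{A_i}$ is a star centered at $x$, eliminating each leaf $x_j$ leaves exactly its single larger neighbour $x$ (and creates no fill edge), eliminating $x$ leaves exactly its single larger neighbour $y$, and $y$ has no larger neighbour. Thus $\pi_i$ has width $1$, so the dependency treewidth of $A_i$ with respect to $\mathcal{P}_i$ is at most $1$ (in fact exactly $1$, as $G_{A_i}$ has an edge), which completes the proof.

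I expect the main obstacle to be the computation of $\mathcal{P}_i$ --- specifically, verifying from the definition of the resolution-path dependency scheme that it inserts no relation of the form $x_j\le x$ or $x_j\le y$; any such relation would make $\pi_i$ incompatible, and would in fact force $x$ to be eliminated before some of its $i+1$ neighbours, collapsing the argument. Everything else is routine: the unbounded side is inherited verbatim from Lemma~\ref{lem:incomp}, and analysing a leaf-first elimination ordering of a star is immediate.
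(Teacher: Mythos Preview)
Your proposal is correct and follows essentially the same approach as the paper: set $\mathcal{C}=\mathcal{A}$, inherit the unbounded trivial-poset bound from Lemma~\ref{lem:incomp}, observe that the resolution-path dependency poset of $A_i$ contains only the relation $y\le x$, and then exploit the star structure of the primal graph to obtain dependency treewidth~$1$. Your write-up is in fact more detailed than the paper's, which simply asserts the form of the resolution-path poset and leaves the width-$1$ verification to the reader; your explicit elimination ordering $\pi_i=(x_1,\dots,x_i,x,y)$ and the justification via the absence of negative literals (hence no non-trivial resolution paths) are welcome additions.
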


\sv{
\begin{proof}[Sketch of Proof]
It suffices to set $\cal C$ to be equal to the class $\cal A$ used in the proof of the previous lemma and then verify that $\cal C$ has the desired properties.
\end{proof}
}

\lv{
\begin{proof}
\lv{
Recall the previous example: $$A_i = \exists x_1, \dots, x_i \forall y \exists x (y\vee x) \wedge \bigwedge_{j=1}^i (x_j\vee x).$$}
\sv{
Recall the previous example: $A_i = \exists x_1, \dots, x_i \forall y \exists x (y\vee x) \wedge \bigwedge_{j=1}^i (x_j\vee x).$}
We have already established that $A_i$ has dependency treewidth $i+1$ when the trivial dependency poset is used. However, e.g., the resolution-path dependency poset~\cite{VanGelder11,SlivovskySzeider12}) contains a single relation $y\leq x$. Since the primal graph of $A_i$ is a star, it is then easy to verify that $A_i$ has dependency treewidth $1$ when the resolution-path dependency poset is used.
\end{proof}
}

Finally, we note that the idea of exploiting dependencies among variables has also given rise to similarly flavored structural measures in the areas of first-order model checking (first order treewidth)~\cite{AdlerWeyer12} and quantified constraint satisfaction (\hubiewidth\footnote{We remark that in their paper, the authors refer to their parameter simply as ``the width''. For disambiguation, here we call it CD-width (shorthand for Chen-Dalmau's width).})~\cite{ChenDalmau16}.
Even though the settings differ, Theorem~5.5~\cite{AdlerWeyer12} and Theorem~5.1~\cite{ChenDalmau16} can both be translated to a basic variant of Theorem~\ref{thm:using}. We note that this readily-obtained 
variant of Theorem~\ref{thm:using} would not account for dependency schemes. 
We conclude this subsection with two lemmas which show that there are classes of QBF instances that can be handled by our approach but are not covered by the results of 
Adler, Weyer~\cite{AdlerWeyer12} and Chen, Dalmau~\cite{ChenDalmau16}. 


\lv{
	\par Before we compare the dependency treewidth with the \hubiewidth{} of Chen and Dalmau, we will first define their parameter (or, more specifically, provide a translation into the QBF setting). \hubiewidth{} is also based on an elimination ordering, however with a slight modification: they also use the fact that we can eliminate universal variables without introducing new clauses (see Lemma~\ref{lem:DP-forall}) and therefore we can eliminate the last universal variable even if it appears together with many variables.  Formally the elimination ordering by Chen and Dalmau is defined as follows: 
	
	\begin{DEF}
	Let $I=(\phi, \tau)$  be a QBF instance. 
	Given a linear ordering $\preceq$ of $\var(\phi)$, 	
the \emph{CD-fill-in graph} $H_\preceq$ of $I$
	is the unique minimal graph such that: 
	\vspace{-0.2cm}
	\begin{itemize}
		\item $V(G_I)=V(H_\preceq)$.
		\item $E(H_\preceq)\supseteq E(G_I)$.
		\item If $u\preceq v \preceq w$, $u$ is existentially quantified, and $v, w \in N_{H_\preceq}(u)$, then $vw\in E(H_\preceq)$.
	\end{itemize}
	A \emph{CD-elimination ordering} of a QBF instance $I=(\phi, \tau)$ is 	
	  a linear order $\preceq$ of $\var(\phi)$ such that for each existentially quantified $x$ and each
	  universally quantified $y$ in $\var(\phi)$:
	\vspace{-0.2cm}
	  	  \begin{itemize}
	  \item If $y$ is before $x$ in $\tau$, then $x\preceq y$.
	  \item If $x$ is before $y$ in $\tau$ and there is an edge $xy$ in $H_\preceq$, then $y\preceq x$.
	  \end{itemize}
	The width of a CD-elimination ordering $\preceq$ is the maximum number of neighbors of an existentially quantified vertex $v$ that are larger than $v$ (w.r.t. $\preceq$) in $H_\preceq$.
	\end{DEF}
	
	We remark that one could also use the less restrictive form of elimination orderings considered above with any dependency poset, obtaining a more general variant of Theorem~\ref{thm:using}. However,
	such a notion would lose many of the nice structural properties used in our algorithms
	for finding the decompositions; for instance, the result is no longer a restriction of treewidth and
	does not have any immediate cops-and-robber game characterization. Hence finding such an ordering of small width would become a more challenging problem.
%
	}

\sv{\begin{LEM}[$\star$]}
	\lv{\begin{LEM}}
		There exist infinite classes $\cal D$, $\cal E$ of QBF instances such that:\begin{itemize}
			\item[a.] $\cal D$ has unbounded \hubiewidth{} but dependency treewidth at most $1$ w.r.t. the resolution-path dependency poset.
			\item[b.] $\cal E$ has unbounded dependency treewidth w.r.t.\ any dependency poset but \hubiewidth{} at most~$1$.
		\end{itemize} 
	\end{LEM}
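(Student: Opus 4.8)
The plan is to exhibit two explicit families and verify the four required properties, arranging matters so that the ``lower bound'' side of each claim follows from the primal graph alone and the ``upper bound'' side from a single well-chosen ordering.

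For $\cal E$ (item~b) I would take $\cal E_k$ to be a QBF whose matrix realises the $k\times k$ grid as its primal graph with \emph{all} grid variables universally quantified, together with one extra existential variable $z$ placed last in the prefix and made pendant to one grid vertex by a single clause. Since every $\PPP$-elimination ordering is in particular an elimination ordering of $G_I$, and treewidth is the minimum width over \emph{all} elimination orderings, the \stw{} of $\cal E_k$ with respect to any dependency poset is at least $\tw(G_I)=k$, which is unbounded; this settles item~b's first half for every poset simultaneously. For the second half, every grid variable is universal and precedes $z$, so the first rule of the CD-elimination ordering forces $z$ to be eliminated first; thereafter $z$ has exactly one larger neighbour and all remaining vertices are universal and therefore not counted, giving CD-width $1$. (Taking $\cal E_k$ fully universal works too, with CD-width $0$.)

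For $\cal D$ (item~a) I would take
\[
D_k \;=\; \exists x_1\cdots x_k\;\forall w_1\cdots w_k\;\exists x\;\Bigl(\,\bigwedge_{j=1}^{k}(x_j\vee w_j)\,\Bigr)\wedge\Bigl(\,\bigwedge_{j=1}^{k}(x\vee x_j)\,\Bigr),
\]
whose primal graph is the spider with centre $x$ and legs $x-x_j-w_j$, hence a tree (treewidth $1$). The key preliminary step is to pin down the resolution-path dependency poset: the only clause shared by two variables of different quantifiers is $(x_j\vee w_j)$, and since every literal of $D_k$ is positive, no resolution path ``turns around'', so this poset is exactly $\{\,x_j\le w_j : 1\le j\le k\,\}$ (and in any case it is contained in it). Hence a $\PPP$-elimination ordering need only place each $w_j$ before the corresponding $x_j$; the ordering that first removes all leaves $w_1,\dots,w_k$, then all leaves $x_1,\dots,x_k$ of the resulting star, and finally $x$, is such an ordering of width $1$, so the \stw{} of $D_k$ w.r.t.\ the resolution-path poset is $1$. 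For the CD-width, since $w_j$ is universal and precedes $x$ in the prefix the first CD rule forces $x\preceq w_j$, and since the primal edge $x_jw_j$ lies in $H_\preceq$ and $x_j$ precedes $w_j$ the second CD rule forces $w_j\preceq x_j$, so $x\preceq w_j\preceq x_j$ for all $j$; thus $x$ is eliminated first in \emph{every} CD-elimination ordering, and then its $k$ primal neighbours $x_1,\dots,x_k$ are all larger, witnessing CD-width $\ge k$.

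I expect the delicate point to be this last step, namely arguing that the chain $x\preceq w_j\preceq x_j$ is truly forced in every CD-elimination ordering: a priori a CD-fill-in edge incident to some $w_j$ could appear and alter the second rule's constraints. The resolution is to note that the only primal neighbour of $w_j$ is $x_j$, which the constraint $w_j\preceq x_j$ puts \emph{after} $w_j$, so no existentially quantified vertex eliminated before $w_j$ is adjacent to $w_j$; consequently $N_{H_\preceq}(w_j)=\{x_j\}$ in every CD-elimination ordering, the fill-in analysis is clean, and the constraint set is exactly $\{x\preceq w_j\}\cup\{w_j\preceq x_j\}$ (which is satisfiable, so valid CD-elimination orderings exist). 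A minor additional point is that one never needs the precise definition of the resolution-path scheme for $D_k$: it is enough that its poset is a subposet of $\{\,x_j\le w_j\,\}$, which is immediate from the all-positive, bounded-degree structure.
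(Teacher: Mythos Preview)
Your argument is correct, and both constructions work as advertised. The paper, however, proceeds with simpler instances and a slightly different division of labour.

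For $\cal E$, the paper takes the all-universal formula $E_i=\forall x_1\cdots\forall x_i\,\bigwedge_{p<q}(x_p\vee x_q)$, whose primal graph is $K_i$. With no existential variables the \hubiewidth{} is $0$ outright, and the clique forces dependency treewidth $i-1$ under every poset. Your grid-with-pendant achieves the same separation with \hubiewidth{} $1$; your parenthetical ``fully universal works too'' is essentially the paper's choice (with a clique in place of a grid).

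For $\cal D$, the paper reuses the family $A_i=\exists x_1\cdots x_i\,\forall y\,\exists x\,(y\vee x)\wedge\bigwedge_j(x_j\vee x)$ already analysed in the preceding lemma, so the resolution-path upper bound of $1$ is inherited. For the \hubiewidth{} lower bound the paper does \emph{not} argue directly: it simply observes that this very family is Chen and Dalmau's own Example~3.6, cited as a class of unbounded \hubiewidth{}. Your spider $D_k$ with the explicit forcing chain $x\preceq w_j\preceq x_j$ gives a fully self-contained lower-bound argument, which is a genuine advantage if one does not wish to rely on the external reference. The trade-off is that your instance is slightly more elaborate, and you must carry out the fill-in check you flag in your last paragraph (which you do correctly: since each $w_j$'s unique primal neighbour $x_j$ sits after it, no existential vertex eliminated earlier can be adjacent to $w_j$, so no spurious rule-2 constraints arise).

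In short: same lemma, two different witnesses. The paper optimises for brevity by recycling an earlier example and outsourcing the \hubiewidth{} computation; you optimise for self-containment by engineering an instance where both bounds are visible from first principles.
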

	
	\sv{
		\begin{proof}[Sketch of Proof]
			It suffices to set $\cal D$ to be equal to the class $\cal A$ used in the proof of Lemma~\ref{lem:incomp} and then observe that the same class of instances is used as an example of a class with unbounded \hubiewidth{} by Chen, Dalmau~\cite[Example 3.6]{ChenDalmau16}. On the other hand, it easy to verify that the QBF instance  $E_i = \forall x_1\forall x_2 \cdots \forall x_i \bigwedge_{1\le p < q \le i}(x_p \vee x_q)$ has  \hubiewidth{} $0$, as $E_i$ does not contain an existential variable. However, the primal graph of $E_i$ is a clique and hence it has dependency treewidth $i-1$. 
		\end{proof}
	}
	
	\lv{
		\begin{proof}
			\textbf{a.}	Recall the previous example: $$A_i = \exists x_1, \dots, x_i \forall y \exists x (y\vee x) \wedge \bigwedge_{j=1}^i (x_j\vee x).$$
			We have already established that $A_i$ has dependency treewidth $1$ when  the resolution-path dependency poset is used. To establish the other directions it suffices to observe that Example 3.6 of Chen and Dalmau~\cite{ChenDalmau16} uses the class of prefixed graphs obtained from $\cal A$ by taking primal graph of instances $A_i$ and keeping the same prefix as an example of class with unbounded elimination width. 
			
			\textbf{b.} Let $$E_i = \forall x_1\forall x_2 \cdots \forall x_i \bigwedge_{1\le p < q \le i}(x_p \vee x_q).$$
			As $E_i$ does not contain any existentially quantified variable, it is easy to observe that \hubiewidth{} of $E_i$ is $0$. However, the primal graph of $E_i$ is a clique and hence dependency treewidth is $i-1$ w.r.t. every possible dependency poset. 
		\end{proof}
	}

\lv{
	Similarly as above, before we compare first order treewidth to dependency treewidth, we give some necessary definitions.
	
	Let $\cov$ be a binary relation on the variables of some QBF instance $I=(\phi, \tau)$. Then two
	variables $x$ and $y$ are entangled with respect to $\cov$ and $I$, if $x$ occurs in a clause with some variable $z$ such that  $y \cov z$ and $y$ occurs in a clause with some variable $z$ such that  $x \cov z$.
	
	\begin{DEF}
		Let $I=(\phi, \tau)$ be a QBF instance. Then  $\preceq_I^{\text{FO}}$ is the minimal (with respect to $\subseteq$) binary relation on $\var(\phi)$, such that the following holds:
		\begin{itemize}
			\item[(1)] $\preceq_I^{\text{FO}}$ is reflexive. 
			\item[(2)] $\preceq_I^{\text{FO}}$ is transitive. 
			\item[(3)] If $x$ is before $y$ in $\tau$, $x$ and $y$ have different quantifiers in $\tau$, and there is a sequence $x=z_0,\dotsc, z_n=y$ of variables such that for all $0\le i < n$ we have that $z_i$ and $z_{i+1}$ are entangled w.r.t. $\preceq_I$ and $I$ and that $x \preceq_I z_i$ or $y \preceq_I^{\text{FO}} z_i$, then $x\preceq_I^{\text{FO}} y$. 
		\end{itemize}
	\end{DEF}

	\begin{DEF}
		Let $I=(\phi,\tau)$ be a QBF instance  and $x\in \var(\phi)$. The essential alternation
		depth of $x$ in $I$, denoted by $\textup{ead}_I(x)$, is the maximum over all $\preceq_I^{\text{FO}}$-paths $P$ ending in $x$ of the
		number of quantifier changes in $P$, adding $+1$ in case the first variable on $P$ is existentially
		quantified and $+2$ if it is universally quantified.
	\end{DEF}
	For a QBF instance $I=(\phi,\tau)$, let $\VVV_I^{\text{FO}} = (\var(\phi), \leq_\text{ead}^I)$ be the poset such that $x\leq_\text{ead}^I y$ if and only if $\textup{ead}_I(x)\leq \textup{ead}_I(y)$. 
	Notice that $x \preceq_I^{\text{FO}} y$ implies $\textup{ead}_I(x)\leq \textup{ead}_I(y)$ and hence $\VVV_I^{\text{FO}}$ is an extension of $\preceq_I^{\text{FO}}$. 
	The first order treewidth is the minimal width of an elimination ordering that is compatible with $\VVV_I^{\text{FO}}$. 
	
	Now we are ready to prove the following lemma.
		}

\sv{\begin{LEM}[$\star$]}
	\lv{\begin{LEM}}
		There exists an infinite class $\cal F$ of QBF instances such that $\cal F$ has unbounded first order treewidth but dependency treewidth at most $2$ with respect to the resolution-path dependency poset.
	\end{LEM}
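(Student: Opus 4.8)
The plan is to show that the class $\mathcal{A} = \{A_i \mid i \in \mathbb{N}\}$ from Lemma~\ref{lem:incomp} already witnesses the claim, i.e., to take $\mathcal{F} = \mathcal{A}$ with $A_i = \exists x_1 \cdots x_i \forall y \exists x\ (y\vee x) \wedge \bigwedge_{j=1}^{i}(x_j \vee x)$. One direction is already in hand: we showed above that the resolution-path dependency poset of $A_i$ contains only the relation $y \leq x$ (every $x_j$ being incomparable to everything), so its reverse admits the compatible elimination ordering $x_1, \dots, x_i, x, y$, whose width is $1$; hence $\mathcal{A}$ has dependency treewidth at most $2$ (in fact $1$) with respect to the resolution-path dependency poset. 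It remains to prove that $\mathcal{A}$ has unbounded first-order treewidth.

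The core of the argument is to compute enough of the poset $\VVV_{A_i}^{\text{FO}}$ to see that $x$ is its unique maximum. First I would determine $\preceq_{A_i}^{\text{FO}}$: since the entanglement relation contains clause co-occurrence and $\preceq_{A_i}^{\text{FO}}$ is reflexive, $y$ is entangled with $x$ (via $(y\vee x)$) and each $x_j$ is entangled with $x$ (via $(x_j \vee x)$); chasing the closure conditions then yields $y \preceq_{A_i}^{\text{FO}} x$ directly, and $x_j \preceq_{A_i}^{\text{FO}} y$ for every $j$ via the entangled sequence $x_j, x, y$ (the already-established relation $y \preceq_{A_i}^{\text{FO}} x$ certifying the side condition at the intermediate vertex $x$). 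From this the essential alternation depths follow: $\textup{ead}_{A_i}(x_j) = 1$, $\textup{ead}_{A_i}(y) = 2$, and $\textup{ead}_{A_i}(x) = 3$, the witnessing $\preceq^{\text{FO}}$-path for $x$ being $x_j, y, x$ with two quantifier changes plus $1$ for the existential start. Crucially one also checks that no longer $\preceq^{\text{FO}}$-path ends at $x$: any $\preceq^{\text{FO}}$-predecessor of $x$ must be a universal variable occurring before $x$, hence $y$; and any $\preceq^{\text{FO}}$-predecessor of $y$ must be an existential variable occurring before $y$, hence some $x_j$, which has no $\preceq^{\text{FO}}$-predecessor at all. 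Thus $x$ is the unique element of maximum essential alternation depth in $\VVV_{A_i}^{\text{FO}}$.

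It follows that $x$ is a minimum of the reverse of $\VVV_{A_i}^{\text{FO}}$, so every elimination ordering of $G_{A_i}$ compatible with $\VVV_{A_i}^{\text{FO}}$ must eliminate $x$ first. Since $G_{A_i}$ is a star with center $x$, the vertex $x$ has $i+1$ neighbours (namely $x_1, \dots, x_i$ and $y$), all of which are eliminated after $x$ and are therefore larger neighbours of $x$ in the corresponding fill-in graph. Hence every compatible elimination ordering has width at least $i+1$, so the first-order treewidth of $A_i$ is at least $i+1$, which is unbounded over $\mathcal{A}$; this completes the proof.

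I expect the main obstacle to be precisely the computation of $\preceq_{A_i}^{\text{FO}}$ and the essential alternation depths. The relation $\preceq^{\text{FO}}$ is defined as a fixpoint together with a somewhat delicate side condition on the connecting entangled sequence, so one must argue carefully both that the required relations $y \preceq_{A_i}^{\text{FO}} x$ and $x_j \preceq_{A_i}^{\text{FO}} y$ genuinely hold and that there are no additional relations inflating $\textup{ead}_{A_i}(y)$ or $\textup{ead}_{A_i}(x_j)$ enough to disturb the strict chain $\textup{ead}_{A_i}(x_j) < \textup{ead}_{A_i}(y) < \textup{ead}_{A_i}(x)$; fortunately the extreme simplicity of $A_i$ — a star whose center is the last, existential variable — makes this bookkeeping routine.
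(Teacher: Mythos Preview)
Your approach is correct and genuinely different from the paper's. The paper constructs a new, fairly intricate family
\[
F_i = \forall x_{2i} y_{2i} \exists x_{2i-1} y_{2i-1} \cdots \forall x_2 y_2 \exists x_1 y_1 \forall z\ (z\vee x_1\vee y_1)\ \bigwedge_{j=1}^{2i-1}\bigl[(x_j\vee x_{j+1}) \wedge (y_j\vee y_{j+1})\bigr]\ \wedge\ \bigwedge_{j=1}^{i} (x_{2j-1}\vee y_1),
\]
and then argues that $x_{2j} \preceq_{F_i}^{\text{FO}} y_1$ for every $j$, forcing $y_1$ to be eliminated before all $x_j$ with $j\ge 2$; since $y_1$ is adjacent to every $x_{2j-1}$, this gives width $\ge i-1$. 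Your argument instead recycles the star instances $A_i$ and pins down the entire $\preceq^{\text{FO}}$ relation explicitly, which is considerably shorter and has the pleasant side effect of making the lemma a direct corollary of material already in the paper. What the paper's construction buys is an example whose primal graph itself has unbounded treewidth (so the separation is not an artifact of the primal graph being a star), but the lemma as stated does not require this, and your witness is perfectly adequate.

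One small wording issue: you write that ``any $\preceq^{\text{FO}}$-predecessor of $x$ must be a universal variable occurring before $x$, hence $y$.'' This is not literally true, since by transitivity every $x_j$ is also a $\preceq^{\text{FO}}$-predecessor of $x$ (via $x_j \preceq^{\text{FO}} y \preceq^{\text{FO}} x$). What you actually need---and what your argument establishes---is that every $\preceq^{\text{FO}}$-chain ending in $x$ has length at most three and hence at most two quantifier alternations, because nothing lies strictly below any $x_j$. Phrasing it that way removes the slip without changing the substance of the proof.
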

	
	\sv{
		\begin{proof}[Sketch of Proof]
			Let $
			F_i =\forall x_{2i} y_{2i} \exists x_{2i-1} y_{2i-1} \cdots \forall x_{2} y_{2} \exists x_{1} y_{1} \forall z (z\vee x_1\vee y_1) 
			\bigwedge_{j=1}^{2i-1} [(x_j\vee x_{j+1}) \wedge  (y_j\vee y_{j+1})] \wedge \bigwedge_{j=1}^{i} (x_{2j-1}\vee y_{1})
			$.	
			It is readily observed that the elimination ordering $zx_1x_2\dotsc x_{2i}y_1y_2\dotsc y_{2i}$ of width $2$ is compatible with the resolution-path dependency poset for this formula. On the other hand, the elimination ordering obtained from first order treewidth is forced to eliminate $y_1$ before $x_j$, $j\ge 2$~(see Definitions~3.3, 3.10, 3.15, together with the definitions on page 5 of Adler and Weyer~\cite{AdlerWeyer12}). Therefore, the first order treewidth of this instance would be at least $i-1$.
		\end{proof}
	}
	
	\lv{
		\begin{proof}
			Let 
			\begin{align*}
			F_i =\forall x_{2i} y_{2i} \exists  x_{2i-1}& y_{2i-1}  \cdots \forall x_{2} y_{2} \exists x_{1} y_{1} \forall z \\ &  (z\vee x_1\vee y_1)  
			\bigwedge_{j=1}^{2i-1} [(x_j\vee x_{j+1}) \wedge  (y_j\vee y_{j+1})] \wedge \bigwedge_{j=1}^{i}  (x_{2j-1}\vee y_{1}). 
			\end{align*}
			
			The resolution-path dependency poset would give us following relations: a chain $y_{2i}\le \cdots \le y_{2} \le y_1 \le z$ and a chain $x_{2i}\le \cdots \le x_{2} \le x_1 \le z$. It is readily observed that the elimination ordering $zx_1x_2\dotsc x_{2i}y_1y_2\dotsc y_{2i}$ with width $2$ is compatible with this dependency poset. 
			
			On the other hand, one can observe that $x_{2j}\preceq_{F_i}^{\text{FO}} x_{2j-1}$ and that $y_1$ and $x_{2j-1}$ are entangled w.r.t. $\preceq_{F_i}^{\text{FO}}$ and hence also $x_{2j}\preceq_{F_i}^{\text{FO}} y_{1}$ for all $1\le j\le i$. It follows that any elimination ordering that is compatible with $\VVV_I^{\text{FO}}$ must have $y_1$ before all $x_j$ for $2\le j\le 2i$ and the width of such elimination ordering is at least $i-1$. 
		\end{proof}
		
Finally, we remark that if one were to show that $\VVV_I^{\text{FO}}$ is a dependency poset, then this would imply that first order treewidth is a special case of dependency treewidth (for this dependency poset). However, proving such a claim goes beyond the scope of this paper.
	}

\section{Dependency Treewidth: Characterizations}
\label{sec:char}

In this section we obtain other equivalent characterizations of dependency treewidth. The purpose of this endeavor is twofold. From a theoretical standpoint, having several natural characterizations (corresponding to the characterizations of treewidth) is not only interesting but also, in some sense, highlights the solid foundations of a structural parameter. From a practical standpoint, the presented characterizations play an important role in Section~\ref{sec:finding}, which is devoted to algorithms for finding optimal dependency elimination orderings.

\textbf{Dependency tree decomposition:}
Let $I$ be a QBF instance with primal graph $G$ and dependency poset $\PPP$ and let
$(T,\chi)$ be a tree decomposition of $G$. Note that the rooted tree
$T$ naturally induces a partial order $\leq_T$ on its nodes, where the
smallest element is the root and leaves form maximal elements. For a vertex $v \in V(G)$, we denote by
$\ptdt{T}{v}$ the unique $\leq_T$\hy minimal node $t$ of $T$ with $v \in
\chi(t)$, which is well-defined because of Properties (T1) and (T3) of
a tree decomposition. Let $<_{\mathcal{T}}$ be the partial ordering of
$V(G)$ such that $u <_{\mathcal{T}} v$ if and only if $\ptdt{T}{u} <_T
\ptdt{T}{v}$ for every $u,v \in V(G)$. We say that $(T,\chi)$ is a \emph{dependency tree
  decomposition} if it satisfies the following additional property:
\begin{itemize}
\item[(T4)] $<_{\mathcal{T}}$ is compatible with $\leq^\PPP$, i.e.,
  for every two vertices $u$ and $v$ of $G$ 
  it holds that whenever $\ptdt{T}{u} <_T \ptdt{T}{v}$ then it does not
  hold that $v \leq^\PPP u$. 
\end{itemize}

\begin{LEM}
  A graph $G$ has a \sto{\PPP} of width at most $\omega$ if and only if
  $G$ has a dependency tree decomposition of width at most $\omega$.
  Moreover, a \sto{\PPP} of width $\omega$ can be obtained
  from a dependency tree decomposition of width $\omega$ in
  polynomial-time and vice versa.
\end{LEM}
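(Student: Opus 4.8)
The plan is to prove both directions by explicit construction, mirroring the classical equivalence between elimination orderings and tree decompositions but carefully tracking the extra compatibility constraint (T4) against $\leq^\PPP$. Recall that a $\PPP$-elimination ordering is, by definition, simultaneously (i) an elimination ordering of $G$ (so its width is the fill-in width), and (ii) a linear extension of the reverse of $\PPP$. So I need a correspondence that translates the ordering constraint (ii) into the node-order constraint (T4) and back.

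\textbf{From dependency elimination ordering to dependency tree decomposition.} First I would take a $\PPP$-elimination ordering $\pi = v_1, \dots, v_n$ of width $\omega$ and run the standard construction: let $H$ be the fill-in graph of $G$ w.r.t.\ $\pi$; build the tree $T$ on node set $\{t_1,\dots,t_n\}$ by making $t_i$ a child of $t_j$ where $v_j$ is the $\pi$-least $H$-neighbor of $v_i$ among those after $v_i$ (and $t_i$ a root if $v_i$ has no later neighbor), and set $\chi(t_i) = \{v_i\} \cup \{v_j : v_iv_j \in E(H),\ v_i <_\pi v_j\}$. It is standard that $(T,\chi)$ is a tree decomposition of $G$ of width $\omega$, and that $F_T(v_i) = t_i$ (i.e.\ $t_i$ is the $\leq_T$-minimal bag containing $v_i$). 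Hence $u <_{\mathcal{T}} v$ iff $t_{i} <_T t_{j}$ where $u = v_i$, $v = v_j$; and — this is the key observation — $t_i <_T t_j$ in this construction implies $v_i <_\pi v_j$, because edges of $T$ always go from a node $t_i$ to a node $t_j$ with $v_i <_\pi v_j$, and $<_T$ is the transitive closure of the child relation. So if $\ptdt{T}{u} <_T \ptdt{T}{v}$ then $u <_\pi v$, and since $\pi$ extends the reverse of $\PPP$, we cannot have $v \leq^\PPP u$ (that would force $v <_\pi u$ or $u = v$). That is exactly (T4), so $(T,\chi)$ is a dependency tree decomposition; the construction is clearly polynomial.

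\textbf{From dependency tree decomposition to dependency elimination ordering.} For the converse I would take a dependency tree decomposition $(T,\chi)$ of width $\omega$ and produce an elimination ordering by ordering the vertices according to a linear extension of $<_{\mathcal{T}}$: concretely, traverse $T$ so that no node is visited before its parent (e.g.\ a BFS/DFS preorder of the rooted tree), and within each node $t$ list the vertices $v$ with $\ptdt{T}{v} = t$ in any order; breaking ties among vertices in the same bag by $\leq^\PPP$ when needed — in fact (T4) guarantees that if $u \le^\PPP v$ and $u \neq v$ then $\ptdt{T}{u} \le_T \ptdt{T}{v}$, since $\ptdt{T}{v} <_T \ptdt{T}{u}$ is forbidden, so a preorder traversal respecting $<_{\mathcal{T}}$ automatically yields a linear extension of the reverse of $\PPP$ once intra-bag ties are resolved consistently with $\PPP$ (which is always possible because vertices sharing a bag that are $\PPP$-comparable can simply be listed in reverse-$\PPP$ order). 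This gives an ordering $\pi$ that extends the reverse of $\PPP$. It remains to check that $\pi$ has elimination width at most $\omega$: this is the standard fact that from a tree decomposition of width $\omega$ one can read off an elimination ordering of width $\le \omega$ by eliminating vertices "from the leaves up" — i.e.\ the key invariant is that the later $H$-neighbors of any $v$ are contained in $\chi(\ptdt{T}{v}) \setminus \{v\}$, which has size $\le \omega$. The only thing to verify beyond the textbook argument is that the particular order we chose (forced by $<_{\mathcal{T}}$, hence by $\PPP$) is still of the "leaves-up" type; but any linear extension of $<_{\mathcal{T}}$ refined to a full vertex order with each vertex $v$ placed no earlier than all vertices in strictly smaller bags works, and standard arguments (e.g.\ Proposition~\ref{pro:tw-cut}, that $\chi(p)\cap\chi(t)$ separates $\chi(T_t)\setminus\chi(p)$ from the rest of $G$) show the fill-in neighbors of $v$ among later vertices stay inside $v$'s own bag.

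\textbf{Main obstacle.} The routine part is the classical elimination-ordering/tree-decomposition equivalence; the genuinely new content — and where I expect to spend the most care — is showing that property (T4) is \emph{exactly} the right translation of "the elimination ordering extends the reverse of $\PPP$," in both directions, and in particular that in the tree-decomposition-to-ordering direction one can always refine $<_{\mathcal{T}}$ to a full vertex ordering that is simultaneously (a) a valid "leaves-up" elimination order of width $\le\omega$ and (b) a linear extension of the reverse of $\PPP$. The subtlety is the within-a-bag tie-breaking: two $\PPP$-comparable vertices may land in the same bag (with the same $\ptdt{T}{\cdot}$ value), and one must check that ordering them by reverse-$\PPP$ there does not conflict with the elimination-width bound — which it does not, since reordering vertices inside a single bag does not change which neighbors are "later" outside considerations already bounded by the bag size. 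I would state this tie-breaking step as a short separate claim to keep the argument clean.
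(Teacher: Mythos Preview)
Your approach is the same as the paper's: the forward direction runs the standard Kloks construction and then checks (T4); the reverse direction reads a linear order off the tree and argues it is simultaneously a width-$\omega$ elimination ordering and a $\PPP$-elimination ordering. The idea is right, but you have the orientation between $<_T$ and $\pi$ reversed throughout, and in the reverse direction this breaks the argument.

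In your forward construction each $t_i$ is made a child of some $t_j$ with $v_i <_\pi v_j$, so \emph{parents} carry $\pi$-later vertices and the root carries the $\pi$-last vertex. Hence $t_i <_T t_j$ (i.e.\ $t_i$ is a proper ancestor of $t_j$) forces $v_j <_\pi v_i$, not $v_i <_\pi v_j$ as you write; you also invert the direction in ``$v\le^\PPP u$ would force $v <_\pi u$'' (a $\PPP$-elimination ordering extends the reverse of $\PPP$, so it forces $u \le_\pi v$). The two sign errors cancel, so (T4) does come out. In the reverse direction there is no such cancellation. A root-first (preorder) traversal lists root-bag vertices first, but those are exactly the vertices that must be eliminated \emph{last}: your invariant that the $\phi$-later $H$-neighbours of $v$ lie in $\chi(F_T(v))$ is false for this order (for the path $a\text{--}b\text{--}c$ with root bag $\{b,c\}$ and child bag $\{a,b\}$, the root-first order $b,c,a$ has elimination width $2$). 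More decisively, your own chain ``$u\le^\PPP v \Rightarrow F_T(u)\le_T F_T(v) \Rightarrow u \le_\phi v$'' shows that your order extends $\PPP$, not its reverse, and is therefore not a $\PPP$-elimination ordering at all. The fix is to traverse leaves-to-root, i.e.\ take a linear extension of the reverse of $<_{\mathcal{T}}$; then the width invariant holds and (T4) lets you refine to an extension of reverse-$\PPP$. One further caution: (T4) only yields $F_T(v)\not<_T F_T(u)$ from $u\le^\PPP v$; the two nodes may be $\le_T$-incomparable, so your stronger claim $F_T(u)\le_T F_T(v)$ is unjustified, and the existence of the required common linear extension deserves an explicit argument rather than being asserted.
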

\begin{proof}
  For the forward direction we will employ the construction given
  by Kloks in~\cite{Kloks94}, which shows that a normal elimination
  ordering can be transformed into a tree decomposition of the same
  width. We will then show that this construction also retains the
  compatibility with $\PPP$.
  Let $\leq^\phi=(v_1, \dotsc, v_n)$ be a dependency elimination ordering for $G$
  of width $\omega$ and let $H$ be the fill-in graph of $G$
  w.r.t. $\leq^\phi$. We will iteratively construct a sequence
  $(\mathcal{T}_0,\dotsc, \mathcal{T}_{n-1})$ such that for every $i$ with
  $0 \leq i < n$, $\mathcal{T}_i=(T_i,\chi_i)$ is dependency tree decompositions
  of the graph $H_i=H[\{v_{n-i}, \dotsc, v_n\}]$ of width at most
  $\omega$. Because $\mathcal{T}_{n-1}$ is a dependency tree
  decomposition of $H_{n-1}=H$ of width at most $\omega$, this 
  shows the forward direction of the lemma.
  In the beginning we set $\mathcal{T}_0$ to be the trivial
  tree decomposition of $H_0$, which contains merely one node 
  whose bag consists of the vertex $v_n$. Moreover, for every $i$ with
  $0 < i < n$, $\mathcal{T}_i$ is obtained from $\mathcal{T}_{i-1}$ as
  follows. Note that because $N_{H_{i}}(v_{n-i})$ induces a clique in
  $H_{i-1}$, $\mathcal{T}_{i-1}$ contains a node that covers all
  vertices in $N_{H_{i}}(v_{n-i})$. 
  Let $t$ be any such bag, then
  is $\mathcal{T}_i$ is obtained from $\mathcal{T}_{i-1}$ by adding a
  new node $t'$ to $T_{i-1}$ making it adjacent to $t$ and setting
  $\chi_i(t')=N_{H_{i}}[v_{n-i}]$. 
  It is known~\cite{Kloks94} that $\mathcal{T}_i$ satisfies the
  Properties (T1)--(T3) of a tree decomposition and it hence only
  remains to show that $\mathcal{T}_i$ satisfies (T4). Since, by induction hypothesis, $\mathcal{T}_{i-1}$ is a dependency tree decomposition, Property~(T4)
  already holds for every pair $u,v \in V(H_{i-1})$. Hence it only
  remains to consider pairs $u$ and $v_{n-i}$ for some $u \in
  V(H_{i-1})$. Because the only node containing $v_{n-i}$ in
  $\mathcal{T}_i$ is a leaf, we can assume that $\ptdt{T}{u}
  <_T \ptdt{T}{v_{n-i}}$ and because $v_{n-i} \leq^\phi u$ it cannot hold that
  $v_{n-i} \leq^\PPP u$, as required.

  For the reverse direction, let $\mathcal{T}=(T,\chi)$ be a $\PPP$\hy
  tree decomposition of $G$ of width at most $\omega$. 
  It is known~\cite{Kloks94} that any linear extension of
  $<_{\mathcal{T}}$ is an elimination ordering for $G$ of width at most $\omega$.
  Moreover, because of Property~(T4), $<_{\mathcal{T}}$ is compatible
  with $\leq^\PPP$ and hence there is a linear extension of
  $<_{\mathcal{T}}$, which is also a linear extension of the reverse of $\leq^\PPP$.
\end{proof}

\textbf{Dependency cops and robber game:} 
Recalling the definition of the (monotone) cops and robber game for treewidth, we define the \emph{dependency cops and robber game} (for a QBF instance $I$ with dependency poset $\PPP$) analogously but with the additional restriction that legal moves must also satisfy a third condition:
\vspace{-0.1cm}
\begin{itemize}
\item[\noindent CM3] $C' \setminus C$ is downward-closed in $R$,
i.e., there is no $r \in R\setminus C'$ with $r \leq^\PPP c$ for any $c \in C' \setminus C$. \end{itemize}
Intuitively, condition CM3 restricts the cop-player by forcing him to search vertices (variables) in an order that is compatible with the dependency poset. 

To formally prove the equivalence between the cop-number for this restricted game and dependency treewidth, we will need to also formalize the notion of a strategy. Here we will represent strategies for the cop-player as rooted trees whose nodes are labeled with positions for
the cop-player and whose edges are labeled with positions for the
robber-player. Namely, we will represent winning strategies for the
cop-player on a primal graph $G$ by a triple $(T,\alpha,\beta)$, where $T$ is a
rooted tree, $\alpha : V(T) \rightarrow 2^{V(G)}$ is a mapping from
the nodes of $T$ to subsets of
$V(G)$, and $\beta : E(T) \rightarrow 2^{V(G)}$, satisfying the
following conditions:
  \vspace{-0.1cm}
\begin{itemize}
\item[\noindent \fbox{CS1}] $\alpha(r)=\emptyset$ and for every component $R$ of $G$, the root node $r$ of $T$
 has a unique child $c$ with $\beta(\{r,c\})=R$, and
\item[\noindent \fbox{CS2}] for every other node $t$ of $T$ with parent $p$ it holds that: 
  the move from position
  $(\alpha(p),\beta(\{p,t\}))$ to position
  $(\alpha(t),\beta(\{t,c\}))$ is legal for every child $c$ of $t$
  and moreover for every component $R$ of $G \setminus
  \alpha(t)$ contained in $\beta(\{p,t\})$, $t$ has a unique child
  $c$ with $\beta(\{t,c\})=R$.
\end{itemize}
Informally, the above properties ensure that every play consistent
with the strategy is winning for the cop-player and moreover for every
counter-move of the robber-player, the strategy gives a move for the
cop-player. The width of a winning strategy for the cop-player is the
maximum number of cops simultaneously placed on $G$ by the
cop-player, i.e., $\max_{t \in V(T)} |\alpha(t)|$. The cop-number of a
graph $G$ is the minimum width of any winning strategy for the
cop-player on $G$. We are now ready to show the equivalence between
dependency tree decompositions and winning strategies for the cop-player.
\lv{\begin{LEM}}
\sv{\begin{LEM}[$\star$]}
  For every graph $G$ the width of an optimal dependency tree
  decomposition plus one is equal to the cop-number of the graph. Moreover, 
  a dependency tree decomposition of width $\omega$ can be obtained
  from a winning strategy for the cop-player of width $\omega+1$ in
  polynomial-time and vice versa.
\end{LEM}
\lv{\begin{proof}}
\sv{\begin{proof}[Sketch of Proof]}
  Let $\mathcal{T}=(T,\chi)$ be a dependency tree decomposition of $G$
  of width $\omega$. We start by showing that $\mathcal{T}$ can be
  transformed into a dependency tree decomposition of width $\omega$
  satisfying:
    \vspace{-0.1cm}
  \begin{itemize}
  \item[(*)] $\chi(r)=\emptyset$ for the root node $r$ of $T$ and
    for every node $t \in V(T)$ with child $c \in V(T)$ in $T$
    the set $\chi(T_c) \setminus \chi(t)$ is a component of $G
    \setminus \chi(t)$.
  \end{itemize}
  To ensure that $\mathcal{T}$ satisfies $\chi(r)=\emptyset$ it is
  sufficient to add a new root vertex $r'$ to $T$ and set
  $\chi(r')=\emptyset$. We show next that starting from the root of $T$
  we can ensure that for every node $t \in V(T)$ with
  child $c$ the set $\chi(T_c) \setminus \chi(t)$ is a component of $G
  \setminus \chi(t)$. Let $t$ be a node with child $c$ in $T$ for
  which this does not hold. 
  \lv{Because of Proposition~\ref{pro:tw-cut},}\sv{By the well-known separation property of tree-decompositions,}
  we have that $\chi(T_c) \setminus \chi(t)$ is a set of components,
  say containing $C_1,\dotsc, C_l$, of
  $G \setminus \chi(t)$. For every $i$ with $1\leq i \leq l$, let
  $\mathcal{T}_i=(T_i,\chi_i)$ be the dependency tree decomposition with $T_i=T_c$ and $\chi_i(t')=\chi(t')
  \cap (C_i \cup \chi(t))$ and root $r_i=c$. Then we replace the
  entire sub dependency tree decomposition of $\mathcal{T}$ induced by $T_c$ in
  $T$ with the tree decompositions
  $\mathcal{T}_1,\dotsc,\mathcal{T}_l$ such that $t$ now becomes
  adjacent to the roots $r_1,\dotsc,r_l$. It is straightforward to
  show that the result of this operation is again a dependency tree
  decomposition of $G$ of width at most $\omega$ and moreover the node
  $t$ has one child less that violates (*). By iteratively applying
  this operation to every node $t$ of $\mathcal{T}$ we eventually
  obtain a dependency tree decomposition that satisfies (*).

  Hence w.l.o.g. we can assume that $\mathcal{T}$ satisfies (*). We
  now claim that $(T,\alpha,\beta)$ where:
  \vspace{-0.2cm}
  \begin{itemize}
  \item $\alpha(t)=\chi(t)$ for every $t \in V(T)$,
  \item for a node $t \in V(T)$ with parent $p \in V(T)$, 
    $\beta(\{p,t\})=\chi(T_t) \setminus \chi(p)$.
  \end{itemize}
  is a winning strategy for $\omega+1$ cops. Observe that because
  $\mathcal{T}$ satisfies (*), it holds that $\alpha(r)=\emptyset$ and
  for every $t \in V(T)$
  with parent $p \in V(T)$, the pair $(\alpha(p),\beta(\{p,t\})$ is a
  position in the visible $\PPP$\hy cops and robber game on $G$. 
  \sv{Furthermore, it is possible to verify that for every $t$, $p$ as above and every child $c$ of
  $t$ in $T$, it holds that the move from $(\alpha(p),\beta(\{p,t\})$
  to $(\alpha(t),\beta(\{t,c\})$ is valid. 
  }
    
  \lv{ We
  show next that for every $t$, $p$ as above and every child $c$ of
  $t$ in $T$, it holds that the move from $(\alpha(p),\beta(\{p,t\})$
  to $(\alpha(t),\beta(\{t,c\})$ is valid. 
  Because 
  $\beta(\{t,c\})=\chi(T_c) \setminus \chi(t) \subseteq \chi(T_t)
  \setminus \chi(p)=\beta(\{p,t\})$ and $\beta(\{p,t\})$ is connected
  in $G \setminus \alpha(p)$ (and hence also in $G \setminus
  (\alpha(p)\cap\alpha(t))$), it follows that
  $\beta(\{p,t\})$ and $\beta(\{t,c\})$ are contained in the same
  component of $G \setminus (\alpha(p)\cap\alpha(t))$, which shows
  CM1. Because of Proposition~\ref{pro:tw-cut}, it holds that
  $\alpha(p) \cap \alpha(t)$ and hence in particular $\alpha(t)$
  separates $\beta(\{p,t\})$ from the rest of the graph, which shows 
  CM2, i.e, $\guards(\beta(\{p,t\}))\subseteq \alpha(t)$. 
  Towards
  showing CM3 suppose for a contradiction that there is a $r \in
  \beta(\{p,t\}) \setminus \alpha(t)$ with $r \leq^\PPP c$ for some $c
  \in \alpha(t) \setminus \alpha(p)$. Because $r \in \beta(\{p,t\})
  \setminus \alpha(t)$ and $c \in \alpha(t)$, we obtain that $c
  <_{\mathcal{T}} r$ and because of (T4) it follows that $r \leq^\PPP
  c$ does not hold. 
  Note that CS1 and the second part of CS2, i.e., for every node $t$
  with parent $p$ and every component $R$ of $G \setminus \alpha(t)$,
  $t$ has a unique child $c$ with $\beta(\{t,c\})=R$, both hold
  because $\mathcal{T}$ satisfies T1 and T3 (T3 is only needed to show
  that the child is unique).}

  On the other hand, let $\mathcal{S}=(T,\alpha,\beta)$ be a winning strategy for the
  cop-player in the visible $\PPP$-cops and robber game on $G$ using
  $\omega$ cops. 
  Observe that $\mathcal{S}$ can be
  transformed into a winning strategy for the cop-player using
  $\omega$ cops satisfying:
  \vspace{-0.2cm}
  \begin{itemize}
  \item[($^a$)] for every node $t$ of $T$ with parent $p$ it holds that
    $\alpha(t) \subseteq \guards(\beta(\{p,t\}))\cup \beta(\{p,t\})$.
  \end{itemize}
  Indeed; if ($^a$) is violated, then one can
  simply change $\alpha(t)$ to $\alpha(t) \cap
  (\guards(\beta(\{p,t\}))\cup \beta(\{p,t\}))$ without violating any
  of CS1 or CS2. Hence we can assume that $\mathcal{S}$ satisfies ($^a$).

 We now claim that $\mathcal{T}=(T,\alpha)$ is a dependency tree decomposition of $G$
  of width $\omega-1$.
  Towards showing T1, let $v \in V(G)$. Because of CS1, it holds that
  either $v \in \alpha(r)$ for the root $r$ of $T$ or there is a child
  $c$ of $r$ in $T$ with $v \in \beta(\{r,c\})$. Moreover, due to CS2
  we have that either $v \in \alpha(c)$ or $v \in \beta(\{c,c'\})$ for
  some child $c'$ of $c$ in $T$. By proceeding along $T$, we
  will eventually find a node $t \in V(T)$ with $v \in \alpha(t)$. 
  Towards showing T2, let $\{u,v\} \in E(G)$. Again because of CS1, it
  holds that either $\{u,v\} \subseteq \alpha(r)$, or $\{u,v\}
  \subseteq \guards(\beta(\{r,c\})) \cup \beta(\{r,c\})$ for some
  child $c$ of $r$ in $T$. Because of CM2, we obtain that
  $\guards(\beta(\{r,c\})) \subseteq \alpha(c)$ and together with CS2,
  we have that either $\{u,v\} \subseteq \alpha(c)$ or $\{u,v\}
  \subseteq \guards(\beta(\{c,c'\}))\cup \beta(\{c,c'\})$ for some
  child $c'$ of $c$ in $T$. By proceeding along $T$, we will
  eventually find a node $t \in V(T)$ with $\{u,v\} \in \alpha(t)$.
    \sv{
  Finally, in order to argue that T3 and T4 hold, we will first establish that $\mathcal{S}$ satisfies the following property:
  \vspace{-0.2cm}
  \begin{itemize}
  \item[$(^b)$] for every node $t$
    with child $c$ in $T$ it holds that $\bigcup_{t' \in
    V(T_c)}\alpha(t') \subseteq \guards(\beta(\{t,c\})\cup\beta(\{t,c\})$. 
  \end{itemize}
  Because of CM2 we have that $\beta(\{t,c\}) \subseteq
  \beta(\{p,t\})$ for every three nodes $p$, $t$, and $c$ such that
  $p$ is the parent of $t$ which in turn is the parent of $c$ in
  $T$. Moreover, because of $(^a)$ we have that $\alpha(t) \subseteq
  \guards(\beta(\{p,t\}))\cup \beta(\{p,t\})$ for every node $t$ with
  parent $p$ in $T$. Applying these two facts iteratively along a path
  from $t$ to any of its descendants $t'$ in $T$, we obtain that
  $\alpha(t') \subseteq \guards(\beta(\{p,t\}))\cup \beta(\{p,t\})$,
  as required. 

  Knowing $(^b)$ and $(^a)$, it is not too difficult to show that T3 and T4 hold. This means that $\mathcal{T}$ is a dependency tree-decomposition, completing the proof. 
  }
    \lv{ 
  Next, we prove that T3 holds by showing that for any three nodes $t_1$, $t_2$, and
  $t_3$ in $T$ such that $t_2$ lies on the unique path from $t_1$ to
  $t_3$ in $T$, it holds that $\alpha(t_1)\cap \alpha(t_2)\subseteq
  \alpha(t_3)$. We will distinguish two cases: (1) $t_1$ is not an ancestor
  of $t_3$ and vice versa, and (2) $t_1$ and $t_2$ lie on
  the unique path from the root of $T$ to $t_3$.

  Before proceeding, we need to establish that $\mathcal{S}$ satisfies the following property:
  \vspace{-0.2cm}
  \begin{itemize}
  \item[$(^b)$] for every node $t$
    with child $c$ in $T$ it holds that $\bigcup_{t' \in
    V(T_c)}\alpha(t') \subseteq \guards(\beta(\{t,c\})\cup\beta(\{t,c\})$. 
  \end{itemize}
  Because of CM2 we have that $\beta(\{t,c\}) \subseteq
  \beta(\{p,t\})$ for every three nodes $p$, $t$, and $c$ such that
  $p$ is the parent of $t$ which in turn is the parent of $c$ in
  $T$. Moreover, because of $(^a)$ we have that $\alpha(t) \subseteq
  \guards(\beta(\{p,t\}))\cup \beta(\{p,t\})$ for every node $t$ with
  parent $p$ in $T$. Applying these two facts iteratively along a path
  from $t$ to any of its descendants $t'$ in $T$, we obtain that
  $\alpha(t') \subseteq \guards(\beta(\{p,t\}))\cup \beta(\{p,t\})$,
  as required.

  Towards showing case (1) assume that this is not the case, i.e., there are
  $t_1$, $t_2$, and $t_3$ as above and a vertex $v \in \alpha(t_1)
  \cap \alpha(t_3)$ but $v \notin \alpha(t_2)$. W.l.o.g. we can assume
  that $t_2$ is the least common ancestor of $t_1$ and $t_2$ in $T$,
  otherwise we end up in case (2). Let $c_1$ and $c_2$ be the two children
  of $t_2$ such that $t_1 \in V(T_{c_1})$ and $t_2 \in
  V(T_{c_2})$. Then because $v \in \alpha(t_i)$ we obtain from $(^b)$
  that $v \in \guards(\beta(\{t_2,c_i\})) \cup \beta(\{t_2,c_i\}))$
  for any $i \in \{1,2\}$. Because the sets $\beta(\{t_2,c_1\})$ and
  $\beta(\{t_2,c_2\})$ are disjoint, it follows that $v \in
  \guards(\beta(\{t_2,c_i\}))$ and thus $v \in \alpha(t_2)$
  contradicting our assumption that this is not the case.

  Towards showing case (2) assume that this is not the case, i.e., there are
  $t_1$, $t_2$, and $t_3$ as above and a vertex $v \in \alpha(t_1)
  \cap \alpha(t_3)$ but $v \notin \alpha(t_2)$. W.l.o.g. we can assume
  that $t_2$ is a child of $t_1$ in $T$. Because $v \in \alpha(t_3)$
  we obtain from $(^b)$ that $\alpha(t_3) \subseteq \guards(\beta(\{t_1,t_2\}) \cup
  \beta(\{t_1,t_2\})$. Hence either $v \in
  \guards(\beta(\{t_1,t_2\}))$ or $v \in \beta(\{t_1,t_2\})$. In the
  former case it follows from CM2 that $v \in \alpha(t_2)$ a
  contradiction to our assumption that $v \notin \alpha(t_2)$ and in
  the later case, we obtain that $v \notin \alpha(t_1)$ a
  contradiction to our assumption that $v \in \alpha(t_1)$.

  Towards showing T4 assume that this is not the case, i.e., there are
  $u,v \in V(G)$ with $u <_{\mathcal{T}} v$ but $v \leq^\PPP u$. Let
  $p$ be the parent of $\ptdt{T}{u}$ in $T$. Then because $u,v \in
  \bigcup_{t' \in V(T_{\ptdt{T}{u})})}\alpha(t')$ and $(^b)$, we obtain that 
  $u,v \in \guards(\beta(\{p,\ptdt{T}{u}\})) \cup
  \beta(\{p,\ptdt{T}{u}\})$. Moreover since neither $u$ or $v$ are
  in $\alpha(p)$, we have that $u,v \in
  \beta(\{p,\ptdt{T}{u}\})$. Finally because $u \in
  \alpha(\ptdt{T}{u})$ but not $v \in \alpha(\ptdt{T}{u})$, we obtain
  from CM3 that $u \leq^\PPP v$ contradicting our assumption that $v
  \leq^\PPP u$.
  }
\end{proof}

\section{Computing Dependency Treewidth}
\label{sec:finding}

In this section we will present two exact algorithms to compute
dependency treewidth. The first algorithm is based on the
characterization of dependency treewidth in terms of the cops and
robber game and shows that, for every fixed $\omega$, determining
whether a graph has dependency treewidth at most $\omega$, and in the positive case
also computing a dependency tree decomposition of width at most $\omega$,
can be achieved in polynomial time. The second algorithm is based on a
chain partition of the given dependency poset and shows that if the
width of the poset is constant, then an optimal dependency tree
decomposition can be constructed in polynomial time. 

Before proceeding to the algorithms, we would like to mention here that the fixed-parameter
algorithm for computing first order treewidth~\cite{AdlerWeyer12} can also be used for computing dependency treewidth in the restricted case that the trivial dependency poset is used. 

Below we provide our first algorithm for the computation of dependency treewidth.

\lv{\begin{THE}}
\sv{\begin{THE}[$\star$]}
\label{thm:comput}
   There is an algorithm running in time $\bigoh(|V(G)|^{2\omega+2})$ that, given a graph $G$
   and a poset $\PPP=(V(G),\leq^\PPP)$ and $\omega\in \Nat$, determines whether $\omega$
   cops have a winning strategy in the dependency cops and
   robber game on $G$ and $\PPP$, and if so outputs such a winning strategy.
\end{THE}
\sv{\begin{proof}[Sketch of Proof]}
\lv{\begin{proof}}
  The idea is to transform the cops and robber game on $G$ into a much
  simpler two player game, the so-called simple two player game, 
  which is played on all possible positions of the cops and robber
  game on $G$. 

  A \emph{simple two player game} is  played between two players,
  which in association to the cops and robber game, we will just call
  the cops and the robber player~\cite{GraedelThomasWilke02}. Both players play by moving a token
  around on a so-called \emph{arena}, which is a triple
  $\mathcal{A}=(V_C,V_R,A)$ such that $((V_C \cup V_R),A)$ is a bipartite
  directed graph with bipartition $(V_C,V_R)$. The vertices in $V_C$ are
  said to belong to the cop-player and the vertices in $V_R$ are said to
  belong to the robber-player. Initially, one token is placed 
  on a distinguished starting vertex $s \in V_C \cup V_R$. From then
  onward the player who owns the vertex, say $v$, that currently
  contains the token, has to move the token to an arbitrary successor (i.e., out-neighbor)
  of $v$ in $\mathcal{A}$. The cop-player wins if the robber-player
  gets stuck, i.e., the token ends up in a vertex owned by the
  robber-player that has no successors in $\mathcal{A}$, otherwise the
  robber-player wins. It is well-known that strategies in this game
  are deterministic and memoryless, i.e., strategies for a player are simple
  functions that assign every node owned by the player one of its successors.
  Moreover, the winning region for both players
  as well as their corresponding winning strategy can be computed in
  time $\bigoh(|V_C\cup V_R|+|A|)$ by the following algorithm. The algorithm first computes the winning region $W_C$, as follows.
  
  Initially all vertices owned by the robber-player which do
  not have any successors in $\mathcal{A}$ are placed in $W_C$. 
  The algorithm then iteratively adds the following vertices to $W_C$:
  \vspace{-0.2cm}
  \begin{itemize}
  \item all vertices owned by the cop-player that have at least one
    successor $W_C$,
  \item all vertices owned by the robber-player for which all
    successors are in $W_C$.
  \end{itemize}
  Once the above process stops, the set $W_C$ is the winning region of
  the cop-player in $\mathcal{A}$ and $(V_C\cup V_R)\setminus W_C$ is the
  winning region for the robber-player. Moreover, the winning
  strategy for both players can now be obtained by choosing for every
  vertex a successors that is in the winning region of the player
  owning that vertex (if no such vertex exists, then an arbitrary successor
  must be chosen).

  Given a graph $G$, a poset $\PPP=(V(G),\leq^\PPP)$, and an integer
  $\omega$, we construct an arena $\mathcal{A}=(V_C,V_R,A)$ and a starting
  vertex $s \in V_R$ such that $\omega$ cops have a winning strategy in
  the $\PPP$\hy cops and robber game on $G$ iff the
  cop-player wins from $s$ in the simple two player game on
  $\mathcal{A}$ as follows:
    \vspace{-0.2cm}
  \begin{itemize}
  \item We set $V_C$ to be the set of all pairs $(C,R)$ such that
    $(C,R)$ is a position in the $\PPP$-cops and robber game
    on $G$ using at most $\omega$ cops ($|C|\leq \omega$),
  \item We set $V_R$ to be the set of all triples $(C,C',R)$ such
    that:
    \begin{itemize}
    \item $(C,R)$ is a position in the $\PPP$-cops and robber game
      on $G$ using at most $\omega$ cops\lv{ ($|C|\leq \omega$)}, and
    \item $C' \subseteq V(G)$ is a potential new cop-position for
      at most $\omega$ cops from $(C',R')$, i.e.,
      $\guards(R)\subseteq C'$ and $C$, $R$, and $C'$ satisfy CM3.
    \end{itemize}
  \item From every vertex $(C,R) \in V_C$ we add an arc to all
    vertices $(C,C',R) \in V_R$.
  \item From every vertex $(C,C',R) \in V_R$ we add an arc to all
    vertices $(C',R') \in V_C$ such that the move from $(C,R)$ to
    $(C',R')$ is legal.
  \item Additionally $V_R$ contains the starting vertex $s$ that has
    an outgoing arc to every vertex $(\emptyset,R) \in V_C$ such that
    $R$ is a component of $G$.
      \vspace{-0.2cm}
  \end{itemize}
  
\sv{Finally, it is straightforward to show that the cop-player has a winning strategy from $s$ in $\cal A$ iff $G$ and $\PPP$ have cop-number at most $\omega$. 
}  
  \lv{
  By construction $|V_C|\leq |V(G)|^{\omega+1}$ and $|V_R|\leq
  |V(G)|^{2\omega+1}$. Moreover, because every vertex in $V_C$ has at
  most $|V(G)|^{\omega}$ successors and every vertex in $V_R$ has at
  most $|V(G)|$ successors, we obtain that $|A|\leq |V(G)|^{2\omega+2}$.
  Let us now analyze the running time required to construct
  $\mathcal{A}$. We can construct all vertices $(C,R) \in V_C$ in time 
  $\bigoh(|V(G)|^{\omega}|E(G)|)$ by computing the set of all components of
  $G \setminus C$ for every cop-position $C$. Note that within the
  same time we can additionally compute and store the guards
  $\guards(R)$ for every component $R$. 
  For each vertex in $(C,R) \in V_C$ we
  can then compute the associated vertices $(C,C',R) \in V_R$ and add
  the necessary arcs by
  enumerating all sets $C' \subseteq V(G)$ with $\guards(R) \subseteq
  C'$ and checking for each of those whether $R$ and $C'$ satisfy CM3.
  Enumerating the sets $C' \subseteq V(G)$ with $\guards(R) \subseteq
  C'$ can be achieved in time $\bigoh(|V(G)|^\omega)$ (using the fact that
  we stored $\guards(R)$ for every component $R$). Moreover,
  determining whether $R$ and $C'$ satisfy CM3 can be achieved in time
  $\bigoh(\omega|R|)=\bigoh(\omega|V(G)|)$ by going over all vertices $r \in R$ and
  verifying that $r \in C'$ or that it is not smaller than any vertex
  in $C'$. Hence computing all vertices of $\mathcal{A}$ and all arcs
  from vertices in $V_C$ to vertices in $V_R$ can be achieved in time
  at most
  $\bigoh(|V(G)|^{\omega}|E(G)|+|V(G)|^{2\omega+1}\omega)$, which for the
  natural assumption that $\omega >0$ is at most $\bigoh(|V(G)|^{2\omega+2})$.
  Finally, we need to add the arcs between vertices in $(C,C',R) \in V_R$ and
  the vertices $(C',R') \in V_C$. Note that there is an arc from
  $(C,C',R) \in V_R$ to $(C',R') \in V_C$ if and only if $R'\subseteq
  R$ and moreover for every component $R'$ of $G \setminus C'$ either
  it is a subset of $R$ or it is disjoint with $R$, which can be
  checked in constant time. Hence the total time required for this
  last step is equal to the number of vertices in $V_R$ times $|V(G)|$ which is at
  most $\bigoh(|V(G)|^{2\omega+2})$.
  It follows that the time required to construct $\mathcal{A}$ is at most $\bigoh(|V(G)|^{2\omega+2})$.
  Once the arena is constructed the winning regions as well as the winning
  strategies for both players can be computed in time $\bigoh(|V_C\cup
  V_R|+|A|)\in \bigoh(|V(G)|^{2\omega+2})$.
}
\end{proof}

The next theorem summarizes our second algorithm for computing
dependency treewidth. The core distinction here lies in the fact that the running time does not depend on the dependency treewidth, but rather on the poset-width. This means that the algorithm can precisely compute the dependency treewidth even when this is large, and it will perform better than Theorem~\ref{thm:comput} for formulas with ``tighter'' dependency structures (e.g., formulas which utilize the full power of quantifier alternations).

\begin{THE}\label{the:exact-oew}
   There is an algorithm running in time $\bigoh((|V(G)|^{k}k^2)$ that, given a graph $G$
   and a poset $\PPP=(V(G),\leq^\PPP)$ of width $k$ and $\omega\in \Nat$, 
   determines whether $G$ and $\PPP$ admit a dependency elimination ordering of width
   at most $\omega$, and if yes outputs such a dependency elimination ordering.
%
%
\end{THE}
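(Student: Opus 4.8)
The plan is to compute the minimum width $\omega^\star$ of a dependency elimination ordering of $G$ compatible with $\PPP$ by bottom-up dynamic programming, and then to report \emph{yes} precisely when $\omega^\star\le\omega$. I would first compute a chain partition $(W_1,\dots,W_k)$ of $\PPP$ using Proposition~\ref{pro:comp-chain-part}. The structural observation that makes a small table possible is this: a linear order $(v_1,\dots,v_n)$ of $V(G)$ is a dependency elimination ordering exactly when every prefix $S_i=\{v_1,\dots,v_i\}$ is upward-closed in $\PPP$ (equivalently, each $v_i$ is $\PPP$-minimal in $S_i$), since a linear extension of the reverse of $\PPP$ forces every $\PPP$-ancestor of an already-placed vertex to be placed earlier. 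Moreover, the number of neighbours of $v_i$ that are larger than $v_i$ in the fill-in graph — the quantity whose maximum over $i$ is the width — depends only on the pair $(v_i,S_i)$ and not on the internal order of $S_i$: by the standard description of fill-in edges recalled in Section~\ref{sec:prelims}, it equals $|\guards(K_{v_i}^{S_i})|$, where $K_{v_i}^{S_i}$ is the connected component of $v_i$ in $G[S_i]$.

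Accordingly I would run a dynamic program whose states are the upward-closed sets $S$ of $\PPP$. For such $S$, let $f(S)$ be the minimum, over all dependency elimination orderings of the instance induced on $S$, of the largest later-neighbour count in the corresponding fill-in graph; then $f(\emptyset)=0$ and
\[
  f(S)=\min_{v}\ \max\bigl(f(S\setminus\{v\}),\ |\guards(K_v^{S})|\bigr),
\]
where $v$ ranges over the $\PPP$-minimal elements of $S$ (of which there are at most $k$, since they form an antichain of $\PPP$), $K_v^{S}$ is the component of $v$ in $G[S]$, and one checks that $S\setminus\{v\}$ is again upward-closed. The answer is \emph{yes} iff $f(V(G))\le\omega$, and an optimal dependency elimination ordering is recovered by following the minimizing choices from $V(G)$ downward. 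Since an upward-closed set meets every chain $W_i$ in a suffix of $W_i$, it is encoded by a tuple in $\prod_i\{0,\dots,|W_i|\}$; hence there are at most $\prod_i(|W_i|+1)=\bigoh(|V(G)|^{k})$ states, and they can be enumerated in order of nondecreasing size within this bound.

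The part I expect to be the main obstacle is implementing the transitions so that the total running time is $\bigoh(|V(G)|^{k}k^{2})$ and, in particular, does not grow with $|E(G)|$: this leaves only $\bigoh(k)$ time, per $\PPP$-minimal element $v$ of a fixed state $S$, to obtain $|\guards(K_v^{S})|$ and carry out the comparison. My approach would be to maintain, as the upward-closed sets are enumerated in size order, the partition of $G[S]$ into connected components together with the cardinalities of their guard sets, updating this structure whenever a single vertex — which always belongs to exactly one chain $W_i$ — is inserted or deleted, so that the merges triggered by adding $v$ and the resulting guard count are handled by a bounded number of union-find operations routed through the chain partition. Granting this data-structure claim, correctness of the algorithm follows from the structural observation above together with the elimination-ordering characterization of width recalled in Section~\ref{sec:prelims}, and the running time follows from the $\bigoh(|V(G)|^{k})$ bound on the number of states together with the $\bigoh(k^{2})$ work charged to each.
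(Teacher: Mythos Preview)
Your approach is essentially the paper's: the paper builds an auxiliary digraph whose nodes are pairs $(D,d)$ with $D$ downward-closed and $d$ maximal in $D$, bounds the number of such $D$ by $|V(G)|^{k}$ via a chain partition (Proposition~\ref{pro:comp-chain-part}), and tests reachability from $(\emptyset,\emptyset)$ to $(V(G),\emptyset)$; this is exactly the threshold version of your DP on the complementary upward-closed prefixes. Two remarks: your use of $|\guards(K_v^{S})|$ is the correct quantity for the fill-in width (the paper's stated condition $|N_{G[D\setminus d]}(d)|\le\omega$ reads as original-graph neighbours and does not obviously account for fill-in), and the $\bigoh(k)$-per-transition bookkeeping you rightly flag as the main obstacle is not argued in the paper either---the per-state cost there is asserted rather than justified.
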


\begin{proof}
  To decide whether $G$ has a dependency elimination ordering 
  of width at most $\omega$, we first build an auxiliary 
  directed graph $H$ as follows.

  The vertex set of $H$ consists of all pairs $(D,d)$ such $D
  \subseteq V(G)$ is a downward closed set and $d \in D$ is a maximal
  element of $D$ such that
  $|N_{G[D\setminus d]}(d)|\leq \omega$. Additionally, $H$ contains the
  vertices $(V(G),\emptyset)$ and $(\emptyset,\emptyset)$.
  Furthermore, there is an arc
  from $(D,d)$ to $(D',d')$ of $H$ if and only if $D'=D \cup \{d'\}$ or
  $D=D'=V(G)$ and $d'=\emptyset$.
  This completes the construction of
  $H$. It is immediate that
  $G$ has a dependency elimination ordering of width at
  most $\omega$ if and only if there is a directed path in $H$ from
  $(\emptyset,\emptyset)$ to $(V(G),\emptyset)$. Hence, given $H$ we
  can decide whether $G$ has a dependency elimination ordering
  of width at most $\omega$ (and output it, if it exists) in time $\bigoh(|V(H)|\log
  (|V(H)|)+E(H))$ (e.g., by using Dijkstra's algorithm). It remains to
  analyze the time required to construct $H$ (as well as its size).

  Let $k$ be the width of the poset $\PPP$.
\sv{By the algorithm of Felsner, Raghavan and Spinrad~\cite{frs03},}\lv{Due to Proposition~\ref{pro:comp-chain-part},} we can compute a chain
  partition $\mathcal{C}=(W_1,\dotsc,W_k)$ of width $k$ in time
  $\bigoh(k\cdot |V(G)|^2)$. Note that every downward closed $D$ set can be
  characterized by the position of the maximal element in $D$ on each
  of the chains $W_1,\dotsc,W_k$, we obtain that there are at most
  $|V(G)|^k$ downward closed sets.
  Hence, $H$ has at most
  $\bigoh(|V(G)|^k(k+1))$ vertices its vertex set can be
  constructed in time $\bigoh(|V(G)|^{k}(k+1))$. Since every vertex
  $(D,d)$ of $H$ has at most $k+1$
  possible out-neighbors, we can construct the arc set of $H$ in time
  $\bigoh(|V(G)|^{k}k^2)$.

  Hence,
  the total time required to construct $H$ is
  $\bigoh((|V(G)|^{k}k^2)$ which dominates the time
  required to find a shortest path in $H$, and so the runtime follows.
\end{proof}

\section{Concluding Notes}
Dependency treewidth is a promising decompositional parameter for QBF which overcomes the key shortcomings of previously introduced structural parameters; its advantages include a single-exponential running time, a refined and flexible approach to variable dependencies, and the ability to compute decompositions. It also admits several natural characterizations that show the robustness of the parameter and allows the computation of resolution proofs.

The presented algorithms for computing dependency elimination orderings leave open the question of whether this problem admits a fixed-parameter algorithm (parameterized by dependency treewidth). We note that the two standard approaches for computing treewidth fail here. In particular, the well-quasi-ordering approach with respect to minors does not work since the set of ordered graphs can be observed not to be well-quasi ordered w.r.t.\ the ordered minor relation~\cite{BonaSpielman00}. On the other hand, the records used in the second approach~\cite{Bodlaender93} do not provide sufficient information in our ordered setting.


\paragraph{Acknowledgments} The authors wish to thank the anonymous reviewers for their helpful comments. Eduard Eiben acknowledges support by the Austrian Science Fund (FWF, projects P26696 and W1255-N23). Robert Ganian is also affiliated with FI MU, Brno, Czech Republic.

\bibliographystyle{plain}
\bibliography{literature}

\end{document}